\setlist[description]{leftmargin=\parindent,labelindent=\parindent}
\newcommand{\basicCodeStyle}{\ttfamily\small}
\newcommand{\keywordCodeStyle}{\bfseries}
\newcommand{\code}[1]{\ensuremath{\mathsf{#1}}}
\lstdefinelanguage{troupe}{
keywords={@type, andalso, fun, let, in, case, of, class, type,
rec, not, as, val, if, then, else, true, false, end, raisedTo, tini, attenuate},
sensitive=true,
commentstyle=\small\itshape\ttfamily\textcolor{gray},
keywordstyle=\ttfamily\underbar,
identifierstyle=\ttfamily,
basewidth={0.5em,0.5em},
columns=fixed,
fontadjust=true,
literate={=>}{{$\Rightarrow$}}3 {===}{{$\equiv$}}1 {=/=}{{$\not\equiv$}}1 {|>}{{$\triangleright$}}3 {\\/}{{$\vee$}}2 {/\\}{{$\wedge$}}2 {^}{{$\uparrow$}}1,
morecomment=[s]{(*}{*)}
}
\newcommand\codebraces[1]{ \code{\{}  #1 \code{\}}}
\newcommand\commandmeta{c}
\newcommand\val{v}
\newcommand\cmd\commandmeta
\newcommand\expr{e}
\newcommand\mem{m}
\newcommand\parenone[1]{ ( #1 ) }
\newcommand\lookupone[2]{#1 \parenone{#2}}
\newcommand\lookupEnv\lookupone
\newcommand\Assign[2]{#1 = #2}
\newcommand\IfKw{\code{if}}
\newcommand\ThenKw  {\code{then}}
\newcommand\ElseKw  {\code{else}}
\newcommand{\IfThenElse}[3]{\IfKw\ #1\ \ThenKw\ {#2} \ \ElseKw\ {#3 }}
\newcommand{\Stop}{ \mathbf{Stop} }
\newcommand\block\codebraces
\newcommand\binop{\mathrel{\mathit{op}}}
\newcommand{\pcdeclfree}[1]{\mathsf{pc\_decl\_free}(#1)}
\newcommand{\declWF}[1]{\mathsf{WF}(#1)}
\newcommand\steps{\longrightarrow}
\newcommand\stepsto\steps %AN ALIAS
\NewDocumentCommand{\stepsmany}{o}{
\IfNoValueTF{#1}{\steps^\ast}{\steps^{#1}}
}
\NewDocumentCommand{\stepswith}{sm}{
\IfBooleanTF{#1}
{\xrightarrow{#2}_{\level_\adv}}
{\stepsto_{#2}}
}
\NewDocumentCommand{\stepsmanywith}{sO{*}m}{
\IfBooleanTF{#1}
{\xrightarrow{#3}^{#2}_{\level_\adv}}
{\stepsto^{#2}_{#3}}
}
\newcommand\lbridgestep[3]{\mathrel{\curvearrowright^{#1,#2}_{#3}}}
\newcommand{\obs}[2]{\flowsTo{#1}{#2}}
\newcommand{\nobs}[2]{\notFlowsTo{#1}{#2}}
\newcommand\flowsTo[2]{#1 \sqsubseteq #2}
\newcommand\notFlowsTo[2]{#1 \not \sqsubseteq #2}
\newcommand\configTwo[3]{\langle #1, #2  \rangle}
\newcommand\configThree[3]{\langle #1, #2, #3 \rangle}
\newcommand\configFour[4]{\langle #1, #2, #3, #4 \rangle}
\newcommand\configThreeWithAux[4]{\langle #1, #2, #3 \mid #4 \rangle}
\newcommand\configFiveWithAux[6]{\langle #1, #2, #3, #4, #5 \mid #6 \rangle}
\newcommand\configSixWithAux[7]{\langle #1, #2, #3, #4, #5, #6 \mid #7 \rangle}
\newcommand\configSevenWithAux[8]{\langle #1, #2, #3, #4, #5, #6, #7 \mid #8 \rangle}
\newcommand\configTwoWithAux[3]{\langle #1, #2, \mid #3 \rangle}
\newcommand\confInst\configFiveWithAux
\newcommand\confTerm\configSixWithAux
\newcommand\conf\configSevenWithAux
\newcommand\configCmd\configTwoWithAux
\newcommand\configExpr\configFour
\newcommand\configClock\configThreeWithAux
\newcommand\tsEvent[2]{(#1,#2)}
\newtheorem{definition}{Definition}
\newtheorem{lemma}{Lemma}
\newtheorem{theorem}{Theorem}
\newcommand\defn{\triangleq}
\newcommand\dom[1]{\mathsf{dom} \parenone{#1}}
\newcommand\heaporder\preceq
\newcommand\pc{\mathit{pc}}
\newcommand\ts{\mathit{ts}}
\newcommand\adv{\mathit{adv}}
\newcommand\eventMeta{\alpha}
\newcommand\metaList\overline
\newcommand\cfgmeta{\mathit{cfg}}
\newcommand\indmeta{I}
\newcommand\indprop[5]{\mathrel{ \left[ #1 \right]^{#2,#3}_{#4 \mid #5} }}
\newcommand\indnarr[5]{\mathrel{ \left\langle #1 \right\rangle^{#2,#3}_{#4 \mid #5} }}
\newcommand\syncbridge[2]{\mathrel{\Rightarrow^{#1}_{#2}}}
\newcommand\syncConfig[4]{\configThree{#1}{#2 \mid #3}{#4}}
\newcommand\emptylist{\mathsf{nil}}
\newcommand*{\fullref}[1]{\hyperref[{#1}]{\autoref*{#1} (\nameref*{#1})}}
\newcommand\Skip{\code{skip}}
\newcommand\While[2]{\code{while}\ #1\ \code{do}\ #2  }
\newcommand\Tini[4]{\code{tini}_{#1}\ \code{to}\ #2\ \code{with}\ #3\ \code{do}\ #4  }
\newcommand\Declassify[4]{\Assign{#1}{\code{decl}\ #4\ \code{to}\ #3\ \code{with}\ #2}}
\newcommand\Eval[2]{\code{eval}\ #1\ #2 }
\newcommand\Attenuate[2]{\code{attenuate}\ #1\ \code{to}\ #2}
\newcommand\AuthorityVal[2]{\mathsf{auth}\ #1\  #2 }
\renewcommand\configTwo[2]{\langle #1, #2  \rangle}
\renewcommand\Stop{\code{stop}}
\newcommand\PCDecl[3]{\code{pcdecl}_{#1} ( #2, #3) }
\newcommand\EventAssign[2]{a( #1, #2 )}
\newcommand{\EventTiniExitSym}{\bar{t}}
\newcommand{\EventTiniExit}[4]{\EventTiniExitSym_{#1}( #2, #4)}
\newcommand\EventEmpty{\epsilon}
\newcommand{\EventDeclassifySym}{d}
\newcommand{\EventDeclassify}[4]{\EventDeclassifySym(#1, #2, #4)}
\newcommand\unused\gamma
\newcommand\level\ell
\newcommand\authpurposemeta{\mathit{p}}
\newcommand\basemeta{\mathit{base}}
\newcommand\LabeledValue[2]{\langle #1; #2 \rangle }
\newcommand\TheLattice{\mathcal{L}}
\newcommand\RootAuth{\code{rootauth}}
\newcommand\flowsto{\sqsubseteq}
\newcommand\yields{\Downarrow}
\renewcommand\stepsto\longrightarrow
\newcommand\levelof[1]{\mathit{lev} (#1)}
\newcommand\typeof[1]{\mathit{type} (#1)}
\newcommand\parseCmd[1]{\mathit{parse} (#1)}
\newcommand\Vars[1]{\mathit{vars} (#1)}
\newcommand\evalFree[1]{\mathit{eval\textnormal{-}free} (#1)}
\newcommand\knowledge{k}
\newcommand\progressknowledge{k_{\rightarrow}}
\newcommand\clockknowledge{k_{\rightarrow}^{\VarClock}}
\newcommand\events{t}
\newcommand\ladv{\mathit{adv}}
\newcommand\loweq[1]{\mathrel{\sim_{#1}}}
\newcommand\tproj[2]{\lfloor #1 \rfloor_{#2}}
\title{Reconciling progress-insensitive noninterference and declassification
}
 \author{
   \IEEEauthorblockN{Johan Bay}
   \IEEEauthorblockA{Aarhus University \\ bay@cs.au.dk}
   \and
   \IEEEauthorblockN{Aslan Askarov}
   \IEEEauthorblockA{Aarhus University \\ aslan@cs.au.dk}
 }
\begin{document}

\maketitle
\IEEEpubidadjcol
\IEEEpeerreviewmaketitle

\begin{abstract}
  Practitioners of secure information flow often face a design challenge: what is the right semantic treatment of leaks via termination? On the one hand, the potential harm of untrusted code calls for strong progress-sensitive security. On the other hand, when the code is trusted to not aggressively exploit termination channels, practical concerns, such as permissiveness of the enforcement, make a case for settling for weaker, progress-insensitive security. This binary situation, however, provides no suitable middle point for systems that mix trusted and untrusted code. This paper connects the two extremes by reframing progress-insensitivity as a particular form of declassification.  Our novel semantic condition reconciles progress-insensitive security as a declassification bound on the so-called progress knowledge in an otherwise progress or timing sensitive setting. We show how the new condition can be soundly enforced using a mostly standard information-flow monitor. 
  We believe that the connection established in this work will enable other applications of ideas from the literature on declassification to progress-insensitivity.
\end{abstract}

\section{Introduction}
\label{sec:introduction}
Progress-insensitive noninterference (PINI) is a popular semantic condition for secure information flow. PINI generalizes the classical termination-insensitive noninterference to accommodate I/O interactions and provides a practical foundation for many information flow systems. A known downside of PINI is that it permits leaking arbitrary amounts of information~\cite{tini}. Malicious code may launder data through termination channels by unary encoding the information in the length of the trace or via timing channels. For these reasons, the consensus in the information flow community is to use PINI for trusted settings, where the goal is to prevent accidental information leaks. For untrusted settings, stronger notions of security, such as progress or timing sensitivity, are necessary.

Many practical scenarios, however, combine both trusted and untrusted code. Such combinations are natural to browser mashups, mobile apps, and just about any system that embeds third-party code. The binary consensus provides no suitable middle ground here. Progress-insensitivity is too permissive, whereas progress and timing-sensitivity is too restrictive.

Consider one such example scenario of a mashup that embeds a third-party newsfeed widget. The widget downloads the latest newsfeed from the news server and displays the favorite topic of the user. The choice of the favorite topic is sensitive and, therefore, must not leak to the news server.
Figure \ref{fig:news-widget} presents a pseudo-code for such a widget.
The widget implements a custom caching logic by maintaining a counter and re-fetching the news on every tenth invocation.
For the purpose of this example, we regard the counter as sensitive as well.
 
The code in Figure~\ref{fig:news-widget} is straightforward and  unproblematic. We can imagine crafting a tool that analyzes (statically or dynamically) the code in Figure~\ref{fig:news-widget} for potential information flow violations. But if we are to take the next step and try to prove our tool sound, we hit a semantic conundrum. Because Line~\ref{fig:widget:receive} contains a potentially blocking network operation, it is unclear how long it may take for the server to respond, if ever. This means that if we want our tool to accept programs such as Figure~\ref{fig:news-widget}, we cannot use progress and timing-sensitive security as the basis for soundness. With the binary consensus, the only other option is progress-insensitive security. This option permits blocking and divergence, making it suitable for Figure~\ref{fig:news-widget}. However, it also forces us to place the termination and timing attacks outside of the formal threat model, which weakens our tool.

\begin{figure}
\begin{center}
\begin{lstlisting}[language=C]
function newsWidget (userFavTopic) {
  if (counter % 10 == 0) {
    feed = receive (newsfeed_server_url) /*@ \label{fig:widget:receive} @*/
  }
  counter ++;
  newstext = feed[userFavTopic]
}  
\end{lstlisting}
\end{center}
\caption{Newsfeed widget code\label{fig:news-widget}}
\end{figure}

This paper addresses the problem of the binary situation by presenting a novel semantic definition that connects the two extremes by reconciling progress-insensitive security as a particular form of declassification. 
This reframing means that we can treat progress-insensitivity just like any other declassification -- a selective weakening of a baseline end-to-end security policy.
It also means that we can transfer insights about declassification policies, such as their dimensions and principles~\cite{Dimensions}, to progress-insensitivity.
The key to the new definition is the use of the epistemic approach to information flow, which allows us to specify a bound on the knowledge the attacker learns from observing the progress of the computation in an otherwise progress or timing-sensitive setting. 

Two meta-level points about our definition are worth highlighting.
First, we note that the practice of declassifying termination leaks by itself is not novel. This idea appears in the literature as early as two decades ago in Jif~\cite{Jif} in the context of programming languages and later in HiStar~\cite{Histar} in the context of operating systems. Here, our work provides a firm theoretical basis that this practice lacked. 
In fact, we show that a mostly standard flow-insensitive 
dynamic monitor soundly enforces the new definition.

Second, we stress the value of the epistemic approach in formulating a concise and intuitive definition. It is not clear to us whether the definition can be reformulated in a classical two-trace style while retaining the same degree of clarity. The discussion of the soundness of our monitor presents an operational security invariant that does have the classical two-trace formulation, but that invariant is far from intuitive.

We present our condition in the setting of a simple imperative language with a standard flow-insensitive dynamic monitor, which conveys the condition in a clean form. The simple language does not contain networking or blocking primitives. This omission does not remove generality from our setup because the language already contains the possibility of divergence via infinite loops. 
We have implemented the enforcement of this condition in Troupe~\cite{Troupe} -- a research programming language with dynamic information flow control, actor-based concurrency, and primitives for distributed programming.

The rest of the paper is structured as follows:
Section~\ref{sec:language} introduces the formal setting of a small imperative language we use in this work. 
The presentation of the security condition is split across two sections. Section~\ref{sec:security-condition} presents the security for a progress-sensitive attacker and presents how a mostly standard dynamic monitor can soundly enforce this condition; Section~\ref{sec:timingsensitivity} presents the security condition for a timing-sensitive attacker. 
We discuss our definitions in Section~\ref{sec:future-work} and report on the implementation
experience in Section~\ref{sec:implementation}. Finally, in Sections \ref{sec:related-work} and \ref{sec:conclusion} we discuss related work and conclude.

\section{The security model and the language}
\label{sec:language}
\subsection{Security model}
\label{sec:lang:secmodel}
We assume a standard security lattice $\TheLattice$ of security
levels~$\level$, with distinguished bottom and top levels
$\bot$ and $\top$, and the operations for least upper bound
$\sqcup$ and the lattice order~$\sqsubseteq$.

Our language is a standard imperative language
extended with capability-based declassification,
and a special purpose \code{tini} command for bounded progress-insensitivity that we explain below.
Each variable in the program has a fixed security level $\levelof{x}$ that does not change throughout the execution.
An attacker associated with a security level $\ell$ observes updates to variables with levels up to $\ell$; they
additionally observe the reachability of the \code{tini} blocks, as we explain below.

In the examples we show here, we use a two or three-level lattice with levels $\mathit{L},\mathit{M},\mathit{H}$, where
$\mathit{L} \flowsto \mathit{M} \flowsto \mathit{H}$, and $\ell \flowsto \ell$ for each $\ell \in \{\mathit{L},\mathit{M},\mathit{H}\}$.
We adopt the convention of using upper-case letters to denote concrete lattice elements of $\TheLattice$
and lower-case letters to denote variables of said level. 
As such, $h_1$ and $h_2$ are variables such that $\levelof{h_1} = \levelof{h_2} = H$.

\subsection{The language and the monitoring semantics}
Figure~\ref{fig:syntax} presents the syntax of our language. We explain the formal semantics of the language and then discuss the non-standard features.
\begin{figure}
\fbox{
\begin{minipage}{0.455\textwidth}
\begin{align*}
  \expr ::= &\ n \mid x \mid \expr \binop \expr \mid \Attenuate {\expr} { (\level, p) }  \\
  c ::= &\ \Skip \mid c; c \mid \While{\expr}{\cmd} \mid \IfThenElse{\expr}{\cmd}{\cmd}
  \\ & \mid \Assign{x}{\expr} \mid \Tini{\eta}{\level}{e}{\cmd}
  \\ & \mid \Declassify{x}{\expr}{\level}{\expr} 
  \\ & \mid \Eval \expr \{x_1, \ldots, x_n\} 
\end{align*}    
\end{minipage}
}

\caption{Syntax of the language\label{fig:syntax}}
\end{figure}

\paragraph{Monitoring semantics}
For evaluating commands we use a small-step semantics transition
$\configThree{\cmd}{\mem}{\pc} \stepswith{\eventMeta} \configThree{\cmd'}{\mem'}{\pc'}$,
where $\pc$ is the security level of the program counter, and $\eventMeta$ is the event generated by the step.
The events can be empty events, denoted by $\EventEmpty$, and assignments and
declassifications per the following grammar:
$$\eventMeta ::= \EventEmpty \mid
\EventAssign{x}{\val} \mid
\EventDeclassify{x}{\level}{\level}{\level} \mid
\EventTiniExit{\eta}{\level}{\level}{\level}
$$
The $\Stop$ and \code{pcdecl} commands are only used internally, and therefore not part of the syntax of the language.
Command~$\Stop$ denotes final configurations that cannot step any further.
For evaluating expressions we use a big-step relation $\configTwo{e}{m} \yields \LabeledValue{\basemeta}{\level}$ that relates an expression with a labeled value.
Labeled values $\LabeledValue{\basemeta}{\level}$ consists of a base value and a level, where $\level$ denotes the confidentiality-level of the base value $\basemeta$.
Base values include integers $n$, strings~$s$, and authority values $\AuthorityVal{\level}{\authpurposemeta}$.
In our semantics, we denote the base type (integer, string, or authority) of a base value $\basemeta$ as $\typeof{\basemeta}$,
and we furthermore assign a predetermined type for each program variable such that $\typeof{x}$ denotes the type of variable $x$.
The types of variables are static and cannot be changed during the execution.
Fig.~\ref{fig:semantics:expressions} presents the rules for expression evaluation and Fig.~\ref{fig:semantics} presents the command evaluation rules for our language.
Note how a \code{tini} statement reduces to the sequential composition of its argument and a special \code{pcdecl} command.  The syntactic structure imposed by the \code{tini} blocks ensures that the use of \code{pcdecl} is always well-bracketed since the \code{pcdecl}-command is not part of the surface language.
At runtime, the expanded \code{pcdecl}-commands exhibit a stack-like behavior reminiscent of pc-stacks in other
monitor designs from the literature.

The monitor is inherently progress-sensitive: barring any \code{pcdecl} commands, the $\pc$ never goes down during the execution. A reader familiar with the literature on information flow monitors may spot  deficiencies in the monitor's precision -- for example, it rejects program $(\IfThenElse{h}{\Skip}{\Skip}); \Assign{l}{0}$. This simple monitor is picked for the purpose of exposition to allow us to focus on the presentation of the security condition and the soundness proof in Section~\ref{sec:security-condition}. We further note that while it is possible to add extra precision to this monitor, unlike progress-insensitive monitors that benefit from hybrid analysis, it is difficult to avoid pc creep in progress-sensitive monitors.

\begin{figure}
\framebox[0.98\width]{
  \begin{mathpar}
    \inferrule
    % [E-Lit]
    {
    }{
      \configTwo{\basemeta}{\mem} \yields \LabeledValue{\basemeta}{\bot}
    }
    \and
    \inferrule
    % [E-Lookup]
    {
      \mem(x) = \basemeta
    }{
      \configTwo{x}{\mem} \yields \LabeledValue{\basemeta}{\levelof{x}}
    }
    \and
    \inferrule
    % [E-BinOp]
    {
      \configTwo{\expr_1}{\mem} \yields \LabeledValue{\basemeta_1}{\level_1} \\
      \configTwo{\expr_2}{\mem} \yields \LabeledValue{\basemeta_2}{\level_2} \\
      \typeof{\basemeta_1} = \typeof{\basemeta_2} \\
      \basemeta = \basemeta_1 \oplus \basemeta_2
    }
    {
      \configTwo{\expr_1 \oplus \expr_2 }{\mem} \yields \LabeledValue{\basemeta}{\level_1 \sqcup \level_2}
    }
    \and
    \inferrule
    % [E-Attenuate]
    {
      \configTwo{\expr_1}{\mem} \yields \LabeledValue{\AuthorityVal{\level_{\mathit{auth}_1}}{p_1}}{\level} \\
      {\level_{\mathit{auth}_2}} \flowsto {\level_{\mathit{auth}_1}} \\
      p_2 \leq p_1
    }
    {
      \configTwo{\Attenuate {\expr_1} { ({\level_{\mathit{auth}_2}}, {p_2}) }} {\mem} \yields \LabeledValue{\AuthorityVal{\level_{\mathit{auth}_2}}{p}}{\level}
    }

  \end{mathpar}
  }
  \caption{Semantics of evaluating expressions\label{fig:semantics:expressions}}
\end{figure}

\begin{figure*}
\framebox[0.99\width]{
  \begin{mathpar}
    \inferrule
    % [S-SKIP]
    {~}
    {\configThree{\Skip}{m}{\pc} \stepsto \configThree{\Stop}{m}{\pc} }
    \and
    \inferrule
    % [S-ASSIGN]
    {
      \configTwo{e}{m} \yields  \LabeledValue{v}{\level_e}
      \and
      \typeof{x} = \typeof{v}
      \and
      \pc \sqcup \level_e \flowsto \levelof{x}
    }
    {\configThree{\Assign{x}{e}}{m}{ \pc}
      \stepswith{\EventAssign{x}{v}} \configThree{\Stop}{m [x  \mapsto v] }{\pc } }
    \and
    \inferrule
    % [S-SEQ1]
    {
      \configThree{\cmd_1}{m}{\pc} \stepswith{\eventMeta} \configThree{\Stop}{m'}{\pc'}
    }
    {
      \configThree{\cmd_1; c_2}{m}{\pc} \stepswith\eventMeta \configThree{\cmd_2}{m'}{\pc'}
    }
    \and
    \inferrule
    % [S-SEQ2]
    {
      \configThree{\cmd_1}{m}{\pc'} \stepswith\eventMeta \configThree{\cmd'_1}{m'}{\pc'} \and c'_1 \neq \Stop
    }{
      \configThree{\cmd_1; c_2}{m}{\pc} \stepswith\eventMeta \configThree{\cmd_1';c_2}{m'}{\pc'}
    }
    \and
    \inferrule
    % [S-IF]
    {\configTwo{e}{m} \yields \LabeledValue{\basemeta}{\level} \and
      i = {\begin{cases}
          2 &\text{if } \basemeta = 0 \\
          1 &\text{otherwise}
        \end{cases}}}
    {
      \configThree{\IfThenElse{e}{\cmd_1}{\cmd_2}}{m}{\pc}
      \stepsto
      \configThree{\cmd_i}{m}{\pc \sqcup \level}
    }
    \and
    \inferrule
    % [S-WHILE]
    {~}
    {
      \configThree{\While{e}{\cmd}}{m}{\pc} \stepsto
      \configThree{\IfThenElse{e}{\cmd;\While{e}{\cmd}}{\Skip}}{m}{\pc}
    }
    \and
     \inferrule%[S-DECLASSIFY]
     {
       \configTwo{e_{\mathit{auth}}}{m} \yields
       \LabeledValue{\AuthorityVal{\level_{\mathit{auth}}}{1} }{\level'}
       \and
       \configTwo{e}{m} \yields  \LabeledValue{v}{\level_{\mathit{from}}}
       \and
       \typeof{x} = \typeof{v}
       \and 
       \level' \flowsto \pc 
       \and \\
       \level_{\mathit{to}} \sqcup \pc \flowsto \levelof{x}
       \and
       \level_{\mathit{from}} \flowsto \level_{\mathit{to}} \sqcup \level_{\mathit{auth}}
       \and
       \eventMeta = \EventDeclassify{x}{\level_{\mathit{auth}}}{\level_{\mathit{from}}}{\level_{\mathit{to}}}
       \and
       m' = {m [x \mapsto v ] }
     }{
       \configThree{\Declassify{x}{\expr_{\mathit{auth}}}{\level_{\mathit{to}}}{\expr}}{m}{\pc}
       \stepswith{\eventMeta}
       \configThree{\Stop}{m'}{\pc}
     }
    \and
    \inferrule
    %[S-PINI]
    {
      \configTwo{e}{m} \yields \LabeledValue{ \AuthorityVal{\level}{p} }{\level'}
      \and
      p \geq 0
      \and
      \level' \flowsto \pc
      \and
      \pc \flowsto \level_{\mathit{to}}
    }{
      \configThree{ \Tini{\eta}{\level_{\mathit{to}}}{\expr}{\cmd} } {m}{\pc}
      \stepswith {%\EventTiniEnter{\eta}
       } \configThree{ c ; \PCDecl{\eta}{\level}{{\level_{\mathit{to}}}} } { m }{\pc}
    }
    \and
    \inferrule
    % [S-PCDECL]
    {
      \pc_{\mathit{from}}  \flowsto \pc_{\mathit{to}} \sqcup  \level_{\mathit{auth}}
      \and
      \eventMeta = \EventTiniExit{\eta}{\level_{\mathit{auth}}} { \pc_{\mathit{from}} }{ \pc_{\mathit{to}} }
    }
    {\configThree{\PCDecl{\eta}{\level_{\mathit{auth}}}{\pc_{\mathit{to}}}}{m}{\pc_{\mathit{from}}}
      \stepswith{\eventMeta}
      \configThree{\Stop}{m}{\pc_{\mathit{to}}}
    }
    \and
    \inferrule
    % [S-EVAL]
    {
      \configTwo {\expr}{m} \yields \LabeledValue{s}{\level}\\
      c = \parseCmd{s}\\
      \Vars{c} \subseteq \{x_1, \ldots, x_n\} \\
      \evalFree{c}
    }
    {
      \configThree { \Eval \expr \{x_1, \ldots, x_n\} } { m } {\pc}
      \stepsto
      \configThree { c } { m } {\pc \sqcup \level}
    }
  \end{mathpar}
  }
  \caption{Monitored operational semantics \label{fig:semantics}}
\end{figure*}

\paragraph{Declassifications}
Our language has two different constructs for downgrading:
one for downgrading values (\code{decl}),
and one for downgrading the termination of a region of the program (\code{tini}).
We include two constructs to highlight differences and parallels between the two kinds of declassifications.
Both constructs reveal information by design, but in different ways.
Whereas declassification is a way for the programmer to indicate that an otherwise secret value is public,
the \code{tini} constructs allows the programmer to indicate that a program block (identified by a unique tag $\eta$) should be treated in a progress-insensitive way, which means that the information about the termination of the block is public.
In the jargon of information flow control systems, this exactly amounts to lowering the $\pc$-label at the end of the block.

\paragraph{Authority}
Our language restricts the use of declassifications via a capability-like mechanism that we refer to as \emph{authority}~\cite{Jif}.
Given  a value at
level~$\level_{\mathit{from}}$, an authority of
level $\level_{\mathit{auth}}$ permits a declassification to level
$\level_{\mathit{to}}$ if
$\flowsTo{\level_{\mathit{from}}}{  \level_{\mathit{to}} \sqcup \level_{\mathit{auth}}}$.
At run-time, an authority value $\AuthorityVal {\level} {\authpurposemeta}$ consists of an authority level $\level$ and a purpose bit $\authpurposemeta$.
The purpose bit~1 means that the authority can be used for general purpose declassification,
while the purpose bit~0 means that the authority can only be used for \code{tini}-statements.
For example, assuming that variable $\mathit{auth}_{\mathit{M}}$ contains the value $\AuthorityVal{\mathit{M}}{1}$,
the language allows the declassification
$$
\Declassify{l}{\mathit{auth}_{\mathit{M}}}{L}{m}
$$
but not
$$
\Declassify{l}{\mathit{auth}_{\mathit{M}}}{L}{h}
$$

\paragraph{Attenuate and running untrusted code}
The only way to create an authority value in the language is by attenuation of another authority value.
Initially, the special variable $\RootAuth$ contains the full authority $\AuthorityVal {\top} {1}$.
Our language contains primitives for restricting the access, level, and purpose of authority,
namely \code{attenuate} and \code{eval}.

For example, $\configTwo{\Attenuate {\RootAuth} { ({M}, 0) }} {\mem}$ evaluates to a value $\LabeledValue{\AuthorityVal{M}{0}}{\bot}$ that can only be used for declassifying progress up to level $M$.
For running untrusted code, we provide an $\code{eval}$ command that takes a string $s$ and a set of variables $\{x_1, \dots, x_n\}$.
The semantics of \code{eval} is, that it parses the string to a command $c$ (denoted $c = \parseCmd{s}$) under the condition that $c$ is only allowed to use variables explicitly mentioned in $\{x_1, \dots, x_n\}$ and must not contain nested \code{eval}s.
In this way, our $\code{eval}$-command can be seen as a ``poor man's''-scoping, which we capture in the following Lemma:
\begin{lemma}[$\code{eval}$ memory safety]
  \label{lemma:eval-scoping}
  Suppose
  $\configThree{\Eval e X}{m}{\pc} \stepsmanywith{\events} \configThree {c'} {m'} {\pc'}$.
  Then it holds for all $s$ where $x \in X \implies  m(x) = s(x)$ that
  $$\configThree{\Eval e X}{s}{\pc} \stepsmanywith{\events} \configThree {c'} {s'} {\pc'}$$
  and
  $$x \in X \implies  m'(x) = s'(x)$$
\end{lemma}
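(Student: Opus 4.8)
The plan is to prove a slightly stronger \emph{confinement} statement by induction on the length of the given reduction $\configThree{\Eval e X}{m}{\pc} \stepsmanywith{\events} \configThree{c'}{m'}{\pc'}$. The lemma as stated is not directly amenable to induction, because after the first step the configuration is no longer an $\code{eval}$, so I would first peel off the single $\code{eval}$ step and then reason about the scoped command it produces. The base case (zero steps) is immediate, since then $c' = \Eval e X$ and $m' = m$, $s' = s$ agree on $X$ by hypothesis. For the inductive case, the only rule applicable to $\Eval e X$ is the $\code{eval}$ rule of Figure~\ref{fig:semantics}, which requires $\configTwo{e}{m} \yields \LabeledValue{\basemeta}{\level}$, $c = \parseCmd{\basemeta}$, $\Vars{c} \subseteq X$, and $\evalFree{c}$, and steps to $\configThree{c}{m}{\pc \sqcup \level}$.

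The engine of the proof is a locality sub-lemma for expression evaluation: if $\FV{e} \subseteq X$ and $m(x) = s(x)$ for all $x \in X$, then $\configTwo{e}{m} \yields \LabeledValue{\basemeta}{\level}$ iff $\configTwo{e}{s} \yields \LabeledValue{\basemeta}{\level}$, with the same base value and the same level. This follows by a routine structural induction on $e$ over the rules of Figure~\ref{fig:semantics:expressions}, using that a variable $x$ contributes $\levelof{x}$ deterministically and that $\oplus$ and \code{attenuate} are functional. Applying it to the first step, $e$ evaluates to the same labeled string under $s$, hence $\code{eval}$ parses the same command $c$ and raises the $\pc$ to the same $\pc \sqcup \level$; so after one step both runs reach $\configThree{c}{m}{\pc \sqcup \level}$ and $\configThree{c}{s}{\pc \sqcup \level}$ respectively, with memories still agreeing on $X$, and with $\Vars{c} \subseteq X$ and $\evalFree{c}$.

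I would then prove the strengthened invariant: whenever $\Vars{d} \subseteq X$, $\evalFree{d}$, and two memories agree on $X$, a single step of $d$ from either memory produces the same resulting command $d'$, the same $\pc$, and the same event $\eventMeta$, with the two resulting memories again agreeing on $X$ and with $\Vars{d'} \subseteq X$ and $\evalFree{d'}$. This is a case analysis on the command rules of Figure~\ref{fig:semantics}: every expression occurring in $d$ has free variables inside $\Vars{d} \subseteq X$, so by the locality sub-lemma all guards, right-hand sides, and authority expressions evaluate to identical labeled values on both sides; consequently assignments write the same value to the same ($X$-)variable (preserving agreement and emitting the same $\EventAssign{x}{v}$), conditionals and loops take the same branch and adjust the $\pc$ identically, and the \code{decl}/\code{tini}/\code{pcdecl} steps emit identical events and perform identical $\pc$ updates. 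Preservation of $\Vars{d'} \subseteq X$ holds for every rule once one checks the while-unfolding (which only duplicates the body) and sequencing; preservation of $\evalFree{d'}$ is where eval-freeness earns its keep, since it guarantees no nested $\code{eval}$ can fire and smuggle in a command scoped to a different variable set. Chaining this invariant through the remaining steps, with matching events at each step, yields the same trace $\events$, the same terminal command $c'$ and $\pc'$, and $m'(x) = s'(x)$ for all $x \in X$.

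The main obstacle is getting the induction to go through, which hinges on choosing the confinement invariant $\Vars{d} \subseteq X \wedge \evalFree{d}$ and verifying that it is preserved by \emph{all} reduction rules -- the subtle cases being the while-unfolding and the requirement that $\evalFree{\cdot}$ rules out a nested $\code{eval}$ whose variable annotation could exceed $X$. A secondary point that must be made explicit is the treatment of the very first step: the argument expression $e$ is evaluated against the full memory, so the conclusion depends on $\FV{e} \subseteq X$ (ensuring $e$ produces the same code string under $m$ and $s$); I would either carry this as a hypothesis coming from well-formedness of the surrounding program or fold it into the statement, since otherwise the two runs could parse and execute different commands and the shared $c'$ would fail.
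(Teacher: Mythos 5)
Your proof follows essentially the same route as the paper's: the paper's entire argument is the one-line remark that the claim holds ``by induction in the program resulting from $\parseCmd{s}$ using that no variables except those occurring in $X$ is used,'' and your confinement invariant $\Vars{d} \subseteq X \wedge \evalFree{d}$ together with the expression-locality sub-lemma is precisely the fleshed-out version of that induction, so the proposal is correct and matches in approach. Your closing caveat about the very first step is a genuine observation the paper silently elides: the \code{eval} rule evaluates $e$ against the full memory, so if $\FV{e} \not\subseteq X$ the two runs could parse different commands and the lemma as literally stated would fail; a side condition of the form $\FV{e} \subseteq X$ (or agreement of $m$ and $s$ on $\FV{e}$) is indeed needed.
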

\begin{proof}
  By induction in the program resulting from $\parseCmd{s}$ using that no variables except those occurring in $X$ is used.
\end{proof}

The combination of \code{eval} and \code{attenuate} allows us to attenuate the root-authority by storing
 it in some variable, e.g., $x$, and run untrusted code while only permitting access to $x$.
For example, we may restrict declassifications 
in the evaluation of the command stored in variable $\mathit{m_{code}}$
up to level~$M$ as follows.
\begin{align*}
  &\Assign{\mathit{auth}_{\mathit{M}}}{\Attenuate {\RootAuth} {(M, 1)}}; \\
  &\Eval {\mathit{m_{code}}} {\{\mathit{auth}_{\mathit{M}}, l_1, l_2, m_1, m_2, h_1, h_2\}}
\end{align*}
Note that the program in $\mathit{m_{code}}$ may access high variables $h_1$ and $h_2$ but cannot declassify them since it does not have access to sufficient authority.

%\todo{show examples of running untrusted code}

\paragraph{\code{tini}-blocks}
The \code{tini}-construct allows us to embed progress-insensitive code in an otherwise progress-sensitive setting.
To give some intuition about the \code{tini}-construct, suppose we have the following program
that loops if a variable of level $H$ is positive; or makes an assignment at level $L$ otherwise:
\begin{align*}
  &\While{h > 0 }{\Skip}\\
  &\Assign{l}{0}
\end{align*}

This program is acceptable in a progress-insensitive setting, but is rejected by progress-sensitive security conditions,
since the assignments to $\mathit{l}$ leaks information about the reachability of the join-point.
The \code{tini} construct allows us to embed such code in a progress-sensitive setting by explicitly declassifying the reachability of the end of the block.
Just like regular declassification, the \code{tini}-block also requires an authority argument.
Hence, the example above can be written instead as:
\begin{align*}
  &\Tini{\eta}{L}{\RootAuth}
    {\\&\qquad \While{h > 0 }{\Skip}};\\
  &\Assign{l}{0}
\end{align*}

The design of the \code{tini} block is inspired by similar constructs in large-scale information flow systems:
Jif~\cite{Jif} implements \code{pc}-declassification by a single command for declassifying the $pc$-label although the syntax does not limit the scope of the progress that is declassified.
HiStar~\cite{Histar} implements a similar thing through ``untainting'' gates that can be restricted to only untaint the control flow.

Attenuation of the purpose can be used in conjunction with \code{eval} and the \code{tini} block.
Revisiting the news widget example from Section~\ref{sec:introduction}, the trusted code may evaluate the widget by passing it access to an attenuated authority.
To bring the example closer to the language we have presented, we let $\code{receive}$ fetch the untrusted widget code from a network connection and run it by using $\code{eval}$:
\begin{align*}
  & \Assign{\mathit{untrustedWidget}} {\code{receive\ } \mathit{newsfeed\_server\_url}} ; \\
  & \Assign {\mathit{userFavTopic}} {\code{"Politics"}}; \\
  & \Assign {\mathit{authNews}} {\Attenuate {\RootAuth} {(\mathit{newslev}, 0)}}; \\
%  & \Eval {\mathit{untrustedWidget}} {\{\mathit{userFavTopic}, \mathit{authNews} \}} \\
    &\Tini{\eta}{\bot}{\mathit{authNews}}
    {\\&\qquad   \Eval {\mathit{untrustedWidget}} {\{\mathit{userFavTopic}\}} }
\end{align*}

\section{Security condition}
\label{sec:security-condition}

This section presents a security definition for embedding \code{tini}-blocks when the baseline security is progress-sensitive.

\subsection{Auxiliary definitions}
We use the knowledge-based~\cite{ChangeInKnowledge} approach to define
our security condition.
The high-level idea behind the approach is that we consider an attacker that can observe the execution of the program
and define the knowledge that such attacker obtains as the set of memories that are consistent with seeing the execution up to this point.
The security condition is defined as a bound on how much the knowledge is allowed to change at each step of the execution.

To define such bounds, we first define what it means for memories to be equivalent and define which execution steps are visible to the adversary.

In the following, 
we write $m \loweq{\level} s$ to denote that two memories are equal up to $\level$ (Definition~\ref{def:mem-equiv} below), and $\tproj{\events}{\level}$ to denote a filtering of the trace $\events$ that only includes the events that are observable at level $\level$ (Definition~\ref{def:trace-proj} below).

\begin{definition}[Memory equivalence]\label{def:mem-equiv}
  Two memories $m$ and $s$ are equivalent up to level $\ell$, written $m \loweq{\level} s$,
  if $\dom{m} = \dom{s}$ and it holds that for all $x \in \dom{m}$,
  $$
  \levelof{x} \sqsubseteq \level \implies m(x) = s(x)
  $$
\end{definition}

We define level of an event, denoted $\levelof{\eventMeta}$, as the level of the updated variable for assignment and declassify events,
level $\ell_{\mathit{to}}$ for tini events $\EventTiniExit{\eta}{\level}{\level_{\mathit{from}}}{\level_{\mathit{to}}}$, and $\top$ otherwise:
\begin{align*}
  \levelof{\EventEmpty} &= \top \\
  \levelof{\EventAssign{x}{\_}} &= \levelof{x} \\
  \levelof{\EventDeclassify{x}{\_}{\level}{\_} \mid} &= \levelof{x} \\
  \levelof{\EventTiniExit{\eta}{\_}{\_}{\level_{\mathit{to}}}} &= {\level_{\mathit{to}}}
\end{align*}

\begin{definition}[Trace filtering]\label{def:trace-proj}
  The filtering of a trace $\events$ at level $\level$ written $\tproj{\events}{\level}$ is defined as
  \begin{align*}
    \tproj{[]}{\level} &= [] \\
    \tproj{\events' \cdot \alpha}{\level} &=
                                            \begin{cases}
                                              \tproj{\events'}{\level} \cdot \alpha &\text{if } \levelof{\alpha} \sqsubseteq \level \\
                                              \tproj{\events'}{\level} &\text{otherwise}
                                            \end{cases}
  \end{align*}
\end{definition}

We use the above to define two technical definitions of knowledge.
First, we define attacker knowledge which defines the knowledge of an attacker observing a trace $t$.
\begin{definition}[Attacker knowledge~\cite{Gradualrelease}]
  Given a program $c$, initial memory $m$, initial program counter level $\pc$,
  such that
  $\configThree{\cmd}{m}{\pc} \stepsmanywith{\events} \configThree{\cmd'}{m'}{\pc'}$,
  define \emph{attacker knowledge at level~$\level_\ladv$} to be the set of memories $m'$ that are
  consistent with the observations of the adversary:
  \begin{multline*}
    \knowledge(c, \mem, \events, \level_\ladv ) \defn
    \{ \mem' \mid \mem \loweq{\level_\ladv} \mem' \land \\ \configThree{\cmd}{m'}{\pc} \stepsmanywith{\events'} \configThree{\cmd''}{m''}{\pc''}
    \land \tproj{\events'}{\level_\ladv} = \tproj{\events}{\level_\ladv}
    \}
  \end{multline*}
\end{definition}

We can now use this definition as a building block for defining security conditions.
We can, for example, define progress-sensitive noninterference as follows:
\begin{definition}[Progress-sensitive noninterference]
  Given a program~$\cmd$, initial memory $\mem$ and initial program counter label $\pc$ such that
  $$
  \configThree{\cmd}{\mem}{\pc}
  \stepsmanywith{\events \cdot \alpha}
  \configThree{\cmd'}{\mem'}{\pc'}
  $$
  the run satisfies \emph{progress-sensitive noninterference} if it holds that
  for all $\level_{\adv}$, if $\levelof{\eventMeta} \flowsto \level_{\adv}$ then
  $$
  \knowledge(c, m, t \cdot \alpha, \level_{\adv}) \supseteq \knowledge(c, m, t, \level_{\adv})
  $$
\end{definition}
Note how this definition bounds the knowledge from seing~$t \cdot \alpha$ with the knowledge of seeing $t$.
This essentially means that all the memories that the attacker considered possible when seeing~$t$ are still considered possible after also observing the event $\alpha$.
Note that his is a very strong security condition. To define more lenient conditions, we use another building block: the progress knowledge.

\begin{definition}[Progress knowledge~\cite{AskarovSabelfeld}]
  Given a program $c$, initial memory $m$, initial program counter level $\pc$,
  such that
  $\configThree{\cmd}{m}{\pc} \stepsmanywith{\events} \configThree{\cmd'}{m'}{\pc'}$,
  define \emph{progress knowledge at level~$\level_\ladv$} to be the set of memories $m'$ that are
  consistent with the knowledge up to $\events$ followed further by one more event:
  \begin{multline*}
    \progressknowledge(c, \mem, \events, \level_\ladv ) \defn
    \{ \mem' \mid \mem \loweq{\level_\ladv} \mem' \land \\ \configThree{\cmd}{m'}{\pc} \stepsmanywith{\events'} \configThree{\cmd''}{m''}{\pc''}
    \land \tproj{\events'}{\level_\ladv} = \tproj{\events}{\level_\ladv } \cdot \eventMeta
    \}
  \end{multline*}
\end{definition}

The above allows us to express the standard progress-insensitive noninterference:
\begin{definition}[Progress-insensitive noninterference]
  Given a program~$\cmd$, initial memory $\mem$ and initial program counter label $\pc$ such that
  $$
  \configThree{\cmd}{\mem}{\pc}
  \stepsmanywith{\events \cdot \alpha}
  \configThree{\cmd'}{\mem'}{\pc'}
  $$
  the run satisfies \emph{progress-insensitive noninterference} if it holds that
  for all $\level_{\adv}$, if $\levelof{\eventMeta} \flowsto \level_{\adv}$ then
  $$
  \knowledge(c, m, t \cdot \alpha, \level_{\adv}) \supseteq \progressknowledge(c, m, t, \level_{\adv})
  $$
\end{definition}
Here, the knowledge of an attacker that observes $t \cdot \alpha$ is bounded by the progress knowledge from seeing just $t$.
This exactly captures that the attacker is allowed to rule out the the memories that do not make progress. 

\subsection{Progress-sensitive security with declassification and locally-bound progress-insensitivity}
Armed with the above definitions, we define our main security condition as follows.

\begin{definition}[Progress-sensitive security with declassification and locally-bound progress-insensitivity] \label{def:bpni}
  Given a program~$\cmd$, initial memory $\mem$ and initial program counter label $\pc$ such that
  $$
    \configThree{\cmd}{\mem}{\pc}
    \stepsmanywith{\events \cdot \alpha}
    \configThree{\cmd'}{\mem'}{\pc'}
  $$
  define the run as \emph{secure} if it holds that
  for all $\level_{\adv}$, if $\levelof{\eventMeta} \flowsto \level_{\adv}$ then
  \begin{enumerate}%[label=(\Roman*)]
  \item if $\alpha = \EventDeclassify{\_}{\ell_{\mathit{auth}}}{\ell_{\mathit{from}} }{\ell_{\mathit{to}}}$
    \label{item:1}
    then it should hold that:
    \begin{enumerate}
    \item $\progressknowledge(\cmd, \mem, \events, \level_\adv) \supseteq \knowledge(\cmd, \mem, \events, \level_\adv)$, and
      \label{item:1.1}
    \item $\knowledge(\cmd, \mem, \events \cdot \alpha, \level_\adv) \supseteq \knowledge(\cmd, \mem, \events, \level_\mathit{auth} \sqcup \level_{\adv})$
      \label{item:1.2}
    \end{enumerate}
  \item if $\alpha = \EventTiniExit{\eta}{\ell_{\mathit{auth}}}{\ell_{\mathit{from}}}{\ell_{\mathit{to}}}$
    then it should hold that:
    \label{item:2}
    \begin{enumerate}
    \item $\knowledge(\cmd, \mem, \events \cdot \alpha, \level_\adv) \supseteq \progressknowledge(\cmd, \mem, \events, \ell_\adv)$
    \item $ \progressknowledge(\cmd, \mem, \events, \ell_\adv) \supseteq \knowledge(\cmd, \mem, \events, \level_{\mathit{auth}} \sqcup \level_{\adv})$
    \end{enumerate}
  \item otherwise, it should hold that:
    \label{item:3}
    \begin{align*}
      \knowledge(\cmd, \mem, \events \cdot \alpha, \level_\adv)
      \supseteq \knowledge(\cmd, \mem, \events, \ell_\adv) \\
    \end{align*}
  \end{enumerate}
\end{definition}

The security condition specifies what information the attacker may learn from observing the program events. The baseline of progress-sensitive security is captured in item~\ref{item:3} of the definition stating that the attacker learns nothing from non-declassify events. This rules out many standard examples of direct and indirect flows, as well as the termination leaks such as $$l= 0; (\code{while}\ h > 0\  \code{do}\ \code{skip}); l= 1$$

The other two items weaken the baseline as follows. For declassifications (item~\ref{item:1}) we have two clauses:
Clause~\ref{item:1.1} says that reachability of the declassification conveys no knowledge to the attacker.
Observe that this is expressed as a bound on the progress knowledge!
This clause rules out programs such~as 
$$l= 0; (\code{while}\ h > 0\  \code{do}\ \code{skip}); \Declassify{l}{\mathit{auth}_{\mathit{H}}}{L}{h}$$
that leak via termination without a \code{tini}-statement.

Clause~\ref{item:1.2} specifies an upper bound on the information the attacker learns from the event to be no more the knowledge at level~$\level_{\mathit{auth}} \sqcup \level_\adv$ before the event.
This clause has a flavor of language-based intransitive noninterference~\cite{intransitive}, because it does not otherwise
bound what information from the permitted level is declassified. For example, assuming $\mathit{auth}_M$ and $\mathit{auth}_H$ are
authorities with purpose bit one, this definition accepts the program
\begin{align*}
& \Declassify{m}{\mathit{auth}_H}{M}{h}; \\ 
& \Declassify{l}{\mathit{auth}_M}{L}{m} 
\end{align*}
Both declassifications above are allowed. At the time of the second declassification,
the adversary at $L$ learns the original value of $h$ despite only using the authority of $M$. This is accepted because 
the earlier declassification of $h$ to $m$ happened with sufficient authority.

Clause~\ref{item:1.2} does not regulate exactly \emph{what} information from the level of $\level_{\mathit{auth}} \sqcup \level_\adv$ may be declassified; however, prior work on using knowledge-based conditions for further
constraining what and where to declassify can be easily applied here in an orthogonal manner~\cite{AskarovSabelfeld,chudnov2018assuming}.

For \code{tini}-events (item~\ref{item:2}), we also have two constraints.
The first constraint corresponds to standard progress-insensitive noninterference~\cite{tini}: knowledge of the event must reveal no more than knowledge of the event's existence.
The second constraint is interesting, because it specifies an upper bound on the information leaked by the termination to be no more than the knowledge at level $\level_{\mathit{auth}} \sqcup \level_\adv$ before the event. This is again expressed as a bound on progress knowledge.
This clause rules out programs with insufficient authority for the $\pc$-declassi\-fication such as
$$
l=0; (\Tini{\eta}{L}{\mathit{auth}_{\mathit{M}}}{\code{while}\ h > 0\  \code{do}\ \code{skip}}); l = 1
$$

The definition accepts programs that use \code{tini} blocks as long as the authority for the $\pc$-declassification is sufficient. This
includes nested \code{tini} blocks.  The following program is accepted.
\begin{align*}
& l = 0;\\
& \Tini{\eta_1}{L}{\mathit{auth}_{\mathit{M}}}{ \{ } \\
& \quad \code{if}\ m > 0\ \code{then} \\
& \quad \quad \Tini{\eta_2}{M}{\mathit{auth}_{\mathit{H}}}{} \\
& \quad \quad \quad \code{while}\ h > 0\  \code{do}\ \code{skip} \\
& \quad \code{else\ skip}\ \}  ; \\
& l = 1
\end{align*}

\subsection{A note on the design of item~\ref{item:2}}
\label{twoclauses}
For the simple language of this section, the two clauses of item \ref{item:2} can be simplified to require that for $\alpha = \EventTiniExit{\eta}{\ell_{\mathit{auth}}}{\ell_{\mathit{from}}}{\ell_{\mathit{to}}}$ it must hold that
$$
\knowledge(\cmd, \mem, \events \cdot \alpha, \level_\adv) \supseteq \knowledge(\cmd, \mem, \events, \level_{\mathit{auth}} \sqcup \level_{\adv})
$$
We opted to present the definition without this simplification, because 
in more realistic settings, this simplification is dangerous and leads to occlusion. 

The simplification is possible in our language, because 
 \code{pcdecl} events are attacker-observable and convey little information other than their 
 reachability, thanks to syntactically enforced well-bracketedness of \code{tini}/\code{pcdecl} commands.\footnote{These conveniences help us minimize technical clutter in the paper.}

However, in reality, it may be unfair to assume that attacker observes internal events such as
 \code{pcdecl}. Suppose indeed that \code{pcdecl} has no manifestation 
in the attacker-observable projection of the trace. How would we need to change Definition~\ref{def:bpni} to accommodate this? 
One option is to rephrase item~\ref{item:2} of Definition~\ref{def:bpni} so that event~\code{\eventMeta} refers to \emph{the first observable event after executing \code{pcdecl}}. But such events can communicate more than a unit of information,
as in the program below.
\begin{align*}
    & \Tini{}{L}{ \RootAuth }{ \{ \Skip \} } \\
    & \IfThenElse{h > 0}{ \Assign{l}{0}}{\Assign{l}{1}}
\end{align*}
This program would reduce to 
$$\PCDecl{} {\RootAuth} {L}; \IfThenElse{h > 0}{ \Assign{l}{0}}{\Assign{l}{1}}$$ 
Here, the first event after \code{pcdecl} is one of the low assignments. The approach of the simplified definition accepts this program because
it mistakenly applies the declassification condition to reveal the choice of the high branch.
On the other hand, the two-clause approach that
explicitly constraints the progress knowledge rejects this program.

\subsection{Soundness of the enforcement}

Next, we formally connect the monitoring semantics of Section~\ref{sec:language} with 
Definition~\ref{def:bpni}.
We do this by showing the following statement:
\begin{theorem}[Soundness of the monitoring semantics]\label{thm:soundness-bpni}
  Given a program $c$, memory $\mem$, and level $\pc$ then
  all runs
  $\configThree{\cmd}{m}{\pc} \stepsmanywith{\events}
  \configThree{\cmd'}{m'}{\pc'}$ satisfy
  Definition~\ref{def:bpni}.
\end{theorem}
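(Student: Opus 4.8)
The plan is to reduce Definition~\ref{def:bpni} to a single two-run \emph{bridge lemma} and then discharge its three items by case analysis on the shape of the terminal event~$\alpha$. Unfolding the definitions, each of the six inclusions to be established has the form $A \supseteq B$, where $A$ and $B$ are attacker- or progress-knowledge sets at two observation levels $\level_1 \flowsto \level_2$. After unfolding, such an inclusion asks: given a memory $s$ with $\mem \loweq{\level_2} s$ whose run reproduces the $\level_2$-filtered trace, exhibit a run from $s$ that reproduces the required $\level_1$-filtered trace, possibly one observable event longer. Thus the whole theorem is about \emph{transferring} the reference run from $\mem$ to any sufficiently low-equivalent memory~$s$, matching it observable step by observable step, using the run from $\mem$ as a guide.

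First I would establish the standard supporting lemmas about the monitor. (i) Expression noninterference: if $\mem \loweq{\level} s$ and $\configTwo{e}{\mem} \yields \LabeledValue{v}{\level_e}$, then $e$ evaluates in $s$ to a value with the \emph{same} label $\level_e$, and to the same base value whenever $\level_e \flowsto \level$. (ii) Confinement: if $\pc \not\flowsto \level_\adv$ then a single step emits no $\level_\adv$-observable event and preserves the $\level_\adv$-projection of memory; crucially, the only steps that can emit an observable event while the incoming $\pc$ is high are the two deliberate downgrades, \code{decl} and \code{pcdecl}, and in both the emitted level is capped by $\level_{\mathit{to}}$. (iii) $\pc$-monotonicity: barring \code{pcdecl}, $\pc$ never decreases, and the assignment rule forces $\pc \flowsto \levelof{x}$ for any observable assignment, so every observable non-downgrade step occurs at a low $\pc$. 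I would then define a \emph{bridge step} as a maximal run of confined (non-observable) steps followed by one observable step, and a configuration equivalence pairing syntactically equal commands with $\level_\adv$-equivalent memories and matching \code{tini}/\code{pcdecl} frames.

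The heart of the argument is the bridge lemma, proved by induction on the length of the confined prefix together with the command structure: if $\configThree{\cmd}{\mem}{\pc}$ and $\configThree{\cmd}{s}{\pc}$ are low-equivalent and the first takes a bridge step emitting $\alpha$, then either the second diverges emitting no further observable event, or it takes a bridge step emitting the \emph{same} $\alpha$ and lands in a low-equivalent configuration. Here expression noninterference gives equality of emitted values at low levels, confinement discharges the high-$\pc$ interiors of conditionals and loops, and the syntactic well-bracketedness of \code{tini}/\code{pcdecl} guarantees the matched \code{pcdecl} carries the same static tag, authority, and target level, so the exit events $\EventTiniExit{\eta}{\level_{\mathit{auth}}}{\level}{\level_{\mathit{to}}}$ coincide (note this event does not record the internal $\pc_{\mathit{from}}$). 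For the downgrade events I would prove a refined version in which matching holds under $\loweq{\level_{\mathit{auth}} \sqcup \level_\adv}$: the side condition $\level_{\mathit{from}} \flowsto \level_{\mathit{to}} \sqcup \level_{\mathit{auth}}$ forces the declassified value to be determined by the $\level_{\mathit{auth}} \sqcup \level_\adv$-low memory, while $\pc_{\mathit{from}} \flowsto \pc_{\mathit{to}} \sqcup \level_{\mathit{auth}}$ forces the entire interior $\pc$ of a \code{tini} block, and hence the block's termination behaviour, to be determined at $\level_{\mathit{to}} \sqcup \level_{\mathit{auth}} \flowsto \level_\adv \sqcup \level_{\mathit{auth}}$.

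Finally I would read off the three items. For item~\ref{item:3} ($\alpha$ an assignment or empty observable event) the bridge lemma at $\level_\adv$ gives $\knowledge(\cmd,\mem,\events\cdot\alpha,\level_\adv) \supseteq \knowledge(\cmd,\mem,\events,\level_\adv)$ directly. For item~\ref{item:1} I use that an observable \code{decl} executes at low $\pc$ (since $\pc \flowsto \levelof{x} \flowsto \level_\adv$): its reachability is then a low fact, yielding the progress bound~\ref{item:1.1}, while the refined bridge lemma at $\level_{\mathit{auth}} \sqcup \level_\adv$ yields~\ref{item:1.2} (a finer match also matches at $\level_\adv$). For item~\ref{item:2}, clause~(a) follows because, once the $\level_\adv$-trace is matched, any further observable event produced by $s$ must be the forced \code{pcdecl}-exit $\alpha$; clause~(b) follows from the refined termination argument, since agreement at $\level_{\mathit{auth}} \sqcup \level_\adv$ forces the block to terminate and thus to emit $\alpha$, establishing progress. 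I expect the bridge lemma to be the main obstacle: keeping the two runs aligned across nested \code{tini}/\code{pcdecl} frames while the interior $\pc$ is high, and threading the correct observation level through each downgrade, is where the real work lies, after which the three items are routine corollaries.
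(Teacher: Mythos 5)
Your overall route is the same as the paper's: decompose runs into bridge steps (a maximal confined prefix followed by one observable event), prove a two-run operational invariant by induction on the length of the confined prefix and on command structure, support it with expression noninterference, confinement, and $\pc$-monotonicity lemmas, and then read off the knowledge containments of Definition~\ref{def:bpni} by unfolding the knowledge sets. The supporting lemmas you list correspond closely to the paper's Lemmas~\ref{lem:NI-expressions}, \ref{lem:pc-increases-pcdeclfree} and \ref{lem:obs-event-implies-low-pc}, and your treatment of items~\ref{item:3} and \ref{item:2}(a) via well-bracketedness of \code{tini}/\code{pcdecl} matches the paper's Cases~\ref{bridge-item:3} and \ref{bridge-item:4} of Lemma~\ref{lem:NI-bridge}.

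There is, however, a genuine gap in your ``refined version'' of the bridge lemma for downgrade events. You take as precondition only that the \emph{initial} memories satisfy $\mem \loweq{\level_{\mathit{auth}} \sqcup \level_\adv} s$, and you justify this by arguing that the side condition $\level_{\mathit{from}} \flowsto \level_{\mathit{to}} \sqcup \level_{\mathit{auth}}$ forces the declassified value to be determined by the $(\level_{\mathit{auth}} \sqcup \level_\adv)$-low memory. That determination holds for the memory \emph{at the time of the downgrade}, not for the initial memory: a confined prefix of the bridge step may itself contain a declassification that writes a value from above $\level_{\mathit{auth}} \sqcup \level_\adv$ into a variable at or below it (e.g.\ $\Declassify{m}{\mathit{auth}_H}{M}{h}$ followed by $\Declassify{l}{\mathit{auth}_M}{L}{m}$, with the adversary at $L$ and $\level_{\mathit{auth}} = M$). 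Two initially $M$-equivalent memories that disagree on $h$ then reach the second downgrade with $M$-inequivalent memories, so the induction does not close: the inductive hypothesis for the sub-runs needs more than initial low-equivalence. The paper's fix is the \emph{indistinguishability restriction} (Definition~\ref{def:indnar}): the precondition is $\loweq{\level_{\mathit{auth}} \sqcup \level_\adv}$ further restricted to memory pairs whose runs emit the same sequence of $(\level_{\mathit{auth}} \sqcup \level_\adv)$-observable events, established via synchronized bridging. This is exactly the extra information that $\knowledge(\cmd, \mem, \events, \level_{\mathit{auth}} \sqcup \level_\adv)$ supplies when you discharge items~\ref{item:1}(b) and \ref{item:2}(b) at the top level, so your final step would go through once the lemma is restated; but as proposed, the lemma itself is too weak to be provable by your induction. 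The same repair is needed for your termination argument in item~\ref{item:2}(b).
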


\begin{figure}
\framebox[0.98\width]{
  \begin{mathpar}
    \inferrule
    % [Bridge-Public]
    {
      \cfgmeta \stepswith{\alpha} \cfgmeta'\\
      \obs{\levelof\alpha}{\ell}
    }
    {
      \cfgmeta \lbridgestep{0}{\ell}{\alpha} \cfgmeta'
    }
    \and
    \inferrule
    % [Bridge-Stop]
    {
      \cfgmeta \stepswith{\beta} \configThree{\Stop}{m}{\pc} \\
      \nobs{\levelof\beta}{\ell}
    }
    {
      \cfgmeta \lbridgestep{0}{\ell}{\beta} \configThree{\Stop}{m}{\pc}
    }
    \and
    \inferrule
    % [Bridge-Multi]
    {
      \cfgmeta_1 \stepswith{\beta} \cfgmeta_2\\
      \nobs{\levelof{\beta}}{\ell}\\
      \cfgmeta_2 \lbridgestep{n}{\ell}{\alpha} \cfgmeta_3\\
    }
    {
      \cfgmeta_1 \lbridgestep{n+1}{\ell}{\alpha} \cfgmeta_3
    }
  \end{mathpar}
}
  \caption{Bridge-step relation \label{fig:bridge-step}}
\end{figure}

To get some intuition about the proof, let us think how classical noninterference proofs usually proceed. The security invariant of such proofs boils down to the reasoning along the lines of ``a pair of low-equivalent configurations that each emit attacker-observable events transition to low-equivalent configurations plus the attacker cannot discriminate between the two events.'' Note how low-equivalence is used in both the precondition and the post-condition of such a statement. For declassification, we need to weaken the invariant, which is typically done by strengthening the precondition to relate fewer configurations. Set-theoretically, this strengthening corresponds to picking a relation that is smaller than low-equivalence. Exactly how small is an important design criterion that is dictated by the top-level security requirement such as our Definition~\ref{def:bpni}. 
One challenge that we have encountered in the proof is finding the right equivalence relation for the precondition that is compositional in the applications of the inductive hypothesis.
Our solution to this challenge is to engineer relations that are smaller than low-equivalence, subject to additional constraints we explain below.

First, we define an auxiliary relation that characterizes the intuition of ``configuration emitting an attacker-observable event.'' We call this relation \emph{bridge-step}. Operationally it is defined as a relation between two configurations where the first configuration reaches the second one by taking $n$ intermediate ``secret'' steps (without producing any observable events) and then either emits an observable step or terminates. 
This relation is shown in Fig.~\ref{fig:bridge-step}.
The security intuition behind the bridge relation is that the attacker only observes the configurations related by the bridge relation. Hence, we formulate our security invariant around that relation.

We furthermore define \emph{indistinguishability restriction} $\indnarr{\indmeta}{c}{\pc}{\ell}{\alpha_1 \dots \alpha_k}$ as the restriction of the relation $\indmeta$ to only contain all pairs of the memories that can emit $\alpha_1 \dots \alpha_k$ (in that order) when evaluating $c$ using initial program-counter $\pc$.
To formally define $\indnarr{\indmeta}{c}{\pc}{\ell}{\alpha_1 \dots \alpha_k}$, we introduce another auxiliary definition that synchronizes two bridge-step runs on a list of events.
The synchronized bridge has the effect of demanding that two runs proceed in lock-step w.r.t to their individual bridge-steps.
The rules for synchronized bridging can be seen in Fig.~\ref{fig:synchronized-bridging}.

\begin{figure}
\framebox[0.98\width]{
  \begin{mathpar}
    \inferrule{
      \configThree{c}{m}{\pc} \lbridgestep{n}{\ell}{\alpha} \configThree{c'}{m'}{\pc'}
      \and
      \configThree{c}{s}{\pc} \lbridgestep{n'}{\ell}{\alpha} \configThree{c'}{s'}{\pc'}
    }{
      \syncConfig{c}{m}{s}{\pc}
      \syncbridge{\ell}{\alpha}
      \syncConfig{c'}{m'}{s'}{\pc'}
    }
    \\
    \inferrule{
      k > 1 \and
      \syncConfig{c}{m}{s}{\pc} \syncbridge{\ell}{\alpha_1}
      \syncConfig{c'}{m'}{s'}{\pc'}  \\
      \syncConfig{c'}{m'}{s'}{\pc'}
      \syncbridge{\ell}{\alpha_2 \dots \alpha_k}
      \syncConfig{c''}{m''}{s''}{\pc''}
    }{
      \syncConfig{c}{m}{s}{\pc} \syncbridge{\ell}{\alpha_1 \dots \alpha_k}
      \syncConfig{c''}{m''}{s''}{\pc''}
    }
  \end{mathpar}
  }
  \caption{Synchronized bridging}
  \label{fig:synchronized-bridging}
\end{figure}

We can now define indistinguishability restriction as per Definition~\ref{def:indnar} below.

\begin{definition}[Indistinguishability restriction $\indnarr{\indmeta}{c}{\pc}{\ell}{\eventMeta}$]
  \label{def:indnar}
  Consider a potentially empty sequence of events $\alpha_1 \dots \alpha_k$. Define the relation
  $m \mathrel{\indnarr{\indmeta}{c}{\pc}{\ell}{\alpha_1 \dots \alpha_k}} s$ as follows:
  \begin{mathpar}
    \inferrule{
      m \mathrel{I} s
    }{
      m \indnarr{\indmeta}{c}{\pc}{\ell}{\emptylist} s
    }
    \and
    \inferrule{
      m \mathrel{I} s
      \and
      \syncConfig{c}{m}{s}{\pc} \syncbridge{\ell}{\alpha_1 \dots \alpha_k}
      \syncConfig{c'}{m'}{s'}{\pc''}
    }{
      m \mathrel{\indnarr{\indmeta}{c}{\pc}{\ell}{\alpha_1 \dots \alpha_k}} s
    }
  \end{mathpar}
\end{definition}

With all this auxiliary infrastructure, we  now state the operational definition of security.

\begin{lemma}[Security for monitored evaluations]
  \label{lem:NI-bridge}
  Suppose
  $\configThree{c}{m}{\pc} \lbridgestep{n}{\ell_\adv}{\alpha} \configThree{c'}{m'}{\pc'}$.
  Then the following holds:
  \begin{enumerate}% [label=(\Roman*)]
  \item \textbf{if $\alpha =
      \EventDeclassify{x}{\level_{\mathit{auth}}}{\level_{\mathit{from}}}{\level_{\mathit{to}}}$ and $\levelof{x} \flowsto \level_\adv$:}\\ \label{bridge-item:1}
    Let
    $$I = \indnarr{\loweq{\level_{\mathit{auth}} \sqcup \level_{\adv}}}{c}{\pc}{\level_{\mathit{auth}} \sqcup \level_{\adv}}{\beta_1 ,\ldots, \beta_j}$$
    where
    \begin{align*}
      \configThree{c}{m}{\pc} & \lbridgestep{i_1}{\level_{\mathit{auth}} \sqcup \level_{\adv}}{\beta_1} \configThree{c_1}{m_1}{\pc_1}
       \lbridgestep{i_2}{\level_{\mathit{auth}} \sqcup \level_{\adv}}{\beta_2}\\
       &\ldots
       \lbridgestep{i_j}{\level_{\mathit{auth}} \sqcup \level_{\adv}}{\beta_j} \configThree{c_j}{m_j}{\pc_j}
    \end{align*}
    such that
    $$
    \configThree{c_j}{m_j}{\pc_j} \lbridgestep{i'}{\level_{\mathit{auth}} \sqcup \level_{\adv}}{\alpha} \configThree{c'}{m'}{\pc'}
    $$
    then it holds that for all $s$ such that $m \mathrel{\indmeta} s$,
    $$
    \configThree{c}{s}{\pc} \lbridgestep{n'}{\level_\adv}{\alpha} \configThree{c'}{s'}{\pc'}
    $$
    and $m' \loweq{\level_\adv} s'$.

  \item \textbf{if $\alpha = \EventTiniExit{\eta}{\level_{\mathit{auth}}}{\_}{\level_{\mathit{to}}}$ and $\level_{\mathit{to}} \sqsubseteq \level_\adv$:}\\
    \label{bridge-item:2}
    Let
    $$I = \indnarr{\loweq{\level_{\mathit{auth}} \sqcup \level_{\adv}}}{c}{\pc}{\level_{\mathit{auth}} \sqcup \level_{\adv}}{\beta_1 ,\ldots, \beta_j}$$
    where
    \begin{align*}
    \configThree{c}{m}{\pc} & \lbridgestep{i_1}{\level_{\mathit{auth}} \sqcup \level_{\adv}}{\beta_1} \configThree{c_1}{m_1}{\pc_1}
      \lbridgestep{i_2}{\level_{\mathit{auth}} \sqcup \level_{\adv}}{\beta_2}\\
      &\ldots
      \lbridgestep{i_j}{\level_{\mathit{auth}} \sqcup \level_{\adv}}{\beta_j}
    \configThree{c_j}{m_j}{\pc_j}
    \end{align*}
    such that
    $$
    \configThree{c_j}{m_j}{\pc_j} \lbridgestep{i'} {\level_{\mathit{auth}} \sqcup \level_{\adv}} {\alpha} \configThree{c'}{m'}{\pc'}
    $$
    then it must hold that for all $s$ where $m \mathrel{\indmeta} s$ there exists $\alpha'$ such that
    $$
    \configThree{c}{s}{\pc} \lbridgestep{n'}{\level_{\adv}}{\alpha'} \configThree{c'}{s'}{\pc'}
    $$
    and $m' \loweq{\level_\adv} s'$.
  \item \textbf{if $\alpha \not = \EventTiniExit{\_}{\_}{\_}{\_}$ and $\pc' \sqsubseteq \level_\adv$:}\\
    \label{bridge-item:3}
    For all $s$ where $m \loweq{\level_\adv} s$ it holds that
    $$
    \configThree{c}{s}{\pc} \lbridgestep{n'}{\level_\adv}{\alpha} \configThree{c'}{s'}{\pc'}
    $$
    and if $\alpha$ is not a declassify event $\EventDeclassify{x}{\_}{\_}{\_}$ where $\levelof{x} \flowsto \ell_\adv$ then $m' \loweq{\level_\adv} s'$.

  \item \textbf{if $\alpha = \EventTiniExit{\eta}{\_}{\_}{\_}$ or $\pc' \not \sqsubseteq \level_\adv$:}\\
    \label{bridge-item:4}
    It holds that for all $s$ where $m \loweq{\level_\adv} s $,
    \begin{align*}
      \configThree{c}{s}{\pc}& \lbridgestep{n'}{\ell_\adv}{\alpha'} \configThree{c''}{s'}{\pc''}
      \implies\\
      & m' \loweq{\level_\adv} s' \land c' = c'' \\
      \land\, & \pc' \not \sqsubseteq \level_\adv \implies \left( \pc'' \not \sqsubseteq \level_\adv \land c' = \Stop\right) \\
      \land\, & \pc' \sqsubseteq \level_\adv \implies \left( \pc'' \sqsubseteq \level_\adv \land \alpha = \alpha'\right)
    \end{align*}
  \end{enumerate}
\end{lemma}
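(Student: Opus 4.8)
The plan is to prove \autoref{lem:NI-bridge} by a single simultaneous induction that establishes all four clauses at once, taking as induction measure the total number of small-steps $\stepsto$ comprising the bridge $\configThree{c}{m}{\pc} \lbridgestep{n}{\level_\adv}{\alpha} \configThree{c'}{m'}{\pc'}$, i.e.\ $n+1$. The clauses must be proved together because the declassification and \code{tini}-exit clauses (items~\ref{bridge-item:1} and~\ref{bridge-item:2}) are reduced, through the indistinguishability restriction $I$, to repeated applications of the low-context and confinement clauses (items~\ref{bridge-item:3} and~\ref{bridge-item:4}) \emph{at the raised level} $\level_{\mathit{auth}} \sqcup \level_\adv$. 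Throughout I would lean on two structural facts about the monitor of \autoref{fig:semantics}: first, the $\pc$ is nondecreasing except at the \code{pcdecl} rule, whose \code{tini}-exit event carries level equal to the new $\pc$; second, the assignment and declassification rules enforce $\pc \sqsubseteq \levelof{x}$, so a step taken with $\pc \not\sqsubseteq \level_\adv$ can only write variables invisible at $\level_\adv$.

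I would dispatch the confinement clause (item~\ref{bridge-item:4}) first, as it is the workhorse. By $\pc$-monotonicity, once $\pc \not\sqsubseteq \level_\adv$ it stays so until a \code{tini}-exit, and a \code{tini}-exit whose target is below $\level_\adv$ is itself observable; hence a bridge ending with $\pc' \not\sqsubseteq \level_\adv$ must terminate silently via the middle bridge rule, forcing $c' = \Stop$. The write-check then gives that every intermediate step preserves $\loweq{\level_\adv}$ and that any matching run from $s$ is likewise confined, yielding $c' = c'' = \Stop$ and $m' \loweq{\level_\adv} s'$. For the \code{tini}-exit half of item~\ref{bridge-item:4}, I would use the syntactic well-bracketedness of \code{tini}/\code{pcdecl}: the \code{pcdecl} reached is fixed by the code, so whenever $s$ reaches it, it emits the very same event, which is the source of $\alpha = \alpha'$. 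For item~\ref{bridge-item:3}, I would observe that $\pc' \sqsubseteq \level_\adv$ together with the absence of a \code{tini}-exit forces $\pc \sqsubseteq \level_\adv$ for the whole bridge (a silent \code{tini}-exit can only target a still-high level); consequently every branch is taken on a $\level_\adv$-visible guard, the run from $s$ follows identical control flow, low writes agree because low memory agrees, high writes preserve $\loweq{\level_\adv}$, and the terminal $\alpha$ coincides, the sole exception being a declassification to a visible variable, whose written value is drawn from a higher source -- explaining the carve-out.

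The crux is items~\ref{bridge-item:1} and~\ref{bridge-item:2}. Here the bridge is $\level_\adv$-silent up to the final declassification or \code{tini}-exit, but that $\level_\adv$-silent prefix is generally \emph{not} silent at the raised level $\level_{\mathit{auth}} \sqcup \level_\adv$. My plan is to re-decompose the prefix into the maximal chain of bridge-steps at level $\level_{\mathit{auth}} \sqcup \level_\adv$ -- precisely the $\beta_1 \dots \beta_j$ in the statement -- and to drive both runs forward in lock-step using the inductive hypothesis: each $\beta_i$-bridge is strictly shorter, so I may apply items~\ref{bridge-item:3} and~\ref{bridge-item:4} (and, for $\beta_i$ that are themselves visible declassifications or exits, items~\ref{bridge-item:1} and~\ref{bridge-item:2}) at level $\level_{\mathit{auth}} \sqcup \level_\adv$ to $(m_{i-1}, s_{i-1})$, obtaining that $s$ performs the matching $\beta_i$-bridge and stays $\loweq{\level_{\mathit{auth}} \sqcup \level_\adv}$-related. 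This is exactly membership in $I = \indnarr{\loweq{\level_{\mathit{auth}} \sqcup \level_\adv}}{c}{\pc}{\level_{\mathit{auth}} \sqcup \level_\adv}{\beta_1, \ldots, \beta_j}$, whose synchronization component supplies the restricted precondition each sub-bridge needs; this is what makes $I$ \emph{compositional} under the inductive hypothesis. Having transported both runs to the configuration just before $\alpha$ with agreement at $\level_{\mathit{auth}} \sqcup \level_\adv$, I finish item~\ref{bridge-item:1} by noting that $\level_{\mathit{from}} \sqsubseteq \level_{\mathit{to}} \sqcup \level_{\mathit{auth}}$ and $\level_{\mathit{to}} \sqsubseteq \levelof{x} \sqsubseteq \level_\adv$ give $\level_{\mathit{from}} \sqsubseteq \level_{\mathit{auth}} \sqcup \level_\adv$, so both runs read the same source and write the same result, yielding $m' \loweq{\level_\adv} s'$; item~\ref{bridge-item:2} is analogous, with the matching \code{tini}-exit event supplied by well-bracketedness.

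The step I expect to be the main obstacle is precisely this re-decomposition and level-shift in items~\ref{bridge-item:1} and~\ref{bridge-item:2}. One must show that the $\level_\adv$-silent prefix decomposes \emph{cleanly} into $(\level_{\mathit{auth}} \sqcup \level_\adv)$-bridges with no step straddling a bridge boundary, that the measure strictly decreases on each sub-bridge, and that the synchronized-bridging relation of \autoref{fig:synchronized-bridging} composes so that lock-step agreement at the raised level is preserved across all $j$ segments and through the terminal step. Keeping $I$ neither too large (which would break the declassification bound) nor too small (which would make it non-compositional, i.e.\ not reestablished by the inductive hypothesis) is the delicate design point flagged in the surrounding discussion, and verifying that the engineered $I$ threads correctly through every case of the command-based case analysis is where the real work lies.
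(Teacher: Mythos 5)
Your proposal is correct and matches the paper's own proof in all essentials: a mutual induction on the length of the bridge (the paper phrases it as strong induction on $n$ with an inner structural induction on $c$, using a sequence-decomposition lemma), with clauses~\ref{bridge-item:3} and~\ref{bridge-item:4} doing the work and clauses~\ref{bridge-item:1} and~\ref{bridge-item:2} handled by re-running the silent prefix at the raised level $\level_{\mathit{auth}} \sqcup \level_{\adv}$ via the indistinguishability restriction and synchronized bridging. The only (harmless) divergence is that you propose to \emph{derive} the lock-step prefix by repeated applications of the inductive hypothesis, whereas the paper simply reads the synchronized bridge off the definition of $\indmeta$ in the hypothesis $m \mathrel{\indmeta} s$ and uses the propagation lemma to obtain $m_j \loweq{\level_{\mathit{auth}} \sqcup \level_{\adv}} s_j$ before a single final application of the inductive hypothesis.
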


The indistinguishability restriction of $\loweq{\level}$, which we alluded to earlier, appears in two out of four sub-cases of the invariant. 
This is crucial in the proof when showing the clauses related to declassify-events, since this allows us to account for earlier unobservable declassifies that might become observable through the latest event.
For example, suppose we have an attacker at level $L$ and that we earlier on declassified a value $v$ from $H$ to $M$. 
Now if we later declassify that same value from $M$ to $L$, it is not enough to only assume that the initial memories satisfy memory-equivalence up to $M$ to prove the clause for declassify.
Instead, we ``replay'' the trace at a higher attacker-level  -- in this case $M$ -- which reveals the events that are otherwise only observable at this higher level -- such as declassifications from $H$ to $M$. 
We use these events to synchronize the memories, and then conclude that the two runs must declassify the same value.
This is exactly what the indistinguishability restriction condition
$\indnarr{\loweq{\level_{\mathit{auth}} \sqcup \level_{\mathit{to}}}}{c}{\pc}{\level_{\mathit{auth}} \sqcup \level_{\mathit{to}}}{\alpha_1 ,\ldots, \alpha_k}$
provides. The events $\alpha_1, \ldots, \alpha_k$ here range over the $M$-level events including $H$ to $M$ declassifications; none of these events are typically  observable by $L$.

\opt{proceedings}{
The detailed  proof of Lemma~\ref{lem:NI-bridge} can be found in the extended version of this paper~\cite{bay2020reconciling}, where we also prove 
 Theorem~\ref{thm:soundness-bpni} by showing that runs satisfying Lemma~\ref{lem:NI-bridge} satisfy Definition~\ref{def:bpni}.
}
\opt{extended}{
  The detailed  proof of Lemma~\ref{lem:NI-bridge} can be found in the Appendix, where we also prove Theorem~\ref{thm:soundness-bpni} by showing that runs satisfying Lemma~\ref{lem:NI-bridge} satisfy Definition~\ref{def:bpni}.
}

\section{Timing Sensitivity}
\label{sec:timingsensitivity}
The security condition that we present for progress-sensitive noninterference in Definition~\ref{def:bpni} can be naturally strengthened to also cover timing-sensitive noninterference.
The cautious reader might have noticed already that the monitor we present in our language is actually already enforcing this stronger notion of noninterference.
As an example, the program
\begin{align*}
  &\code{if}\ {h > 0} \\
  & \quad {\code{then}\ \Skip} \\
  & \quad {\code{else}\ \Skip; \Skip; \Skip};\\
  &\Assign {l}{0}
\end{align*}
is accepted by our progress-sensitive security condition, but is not allowed by our monitor.
This shows that our monitoring leaves room for strengthening the security condition so that examples like above are also rejected
by the definition.

To formalize this observation, we add a clock $\ts$ to our configurations $\configClock{c}{m}{\pc}{\ts}$ and timestamps to events $\tsEvent{\ts} {\eventMeta}$ such that our evaluation steps are now defined by the following rule
$$
  \inferrule
  {
    \configThree{c}{m}{\pc} \stepswith{\eventMeta} \configThree{c'}{m'}{\pc'}
  }
  {
    \configClock{c}{m}{\pc}{\ts} \stepswith{\tsEvent{\ts + 1}{\eventMeta}} \configClock{c'}{m'}{\pc'}{\ts + 1}
  }
$$

We extend the definition of when events are observable in the obvious way: a timestamped event
$\tsEvent{\ts}{\eventMeta}$ is observable at level $\ell$ if $\eventMeta$ is observable at level $\ell$.
The definitions of attacker knowledge and progress knowledge from the previous section are also ported to the new setting in a straightforward manner, noting that the initial clock value is 0.
However, we need a new knowledge combinator, that we dub \emph{clock knowledge}.
\begin{definition}[Clock knowledge]
  Given a program $c$, initial memory $m$, initial program counter level $\pc$,
  and initial timestamp $\ts$
  such that
  $\configClock{\cmd}{m}{\pc}{0} \stepsmanywith{\events} \configClock{\cmd'}{m'}{\pc'}{\ts'}$,
  define \emph{clock knowledge at level~$\level_\ladv$} to be the set of memories $m'$ that are
  consistent with the knowledge up to $\events$ followed further by one more event with timestamp $\ts$:
  \begin{multline*}
    \clockknowledge(c, \mem, \events, \level_\ladv, \ts) \defn \{ \mem' \mid \mem \loweq{\level_\ladv} \mem' \\
    \qquad\qquad \land  \configClock{\cmd}{m'}{\pc}{0} \stepsmanywith{\events'} \configClock{\cmd''}{m''}{\pc''}{\ts} \\
    \qquad\qquad\qquad  \land \tproj{\events'}{\level_\ladv} = \tproj{\events}{\level_\ladv } \cdot \tsEvent{\ts}{\eventMeta}
    \}
  \end{multline*}
\end{definition}

Observe that the clock knowledge, the progress knowledge, and the attacker knowledge are 
related by their definitions as follows:
$ \knowledge(\cmd, \mem, \events, \level_\adv) 
        \mathop{\supseteq} 
        \progressknowledge(\cmd, \mem, \events, \level_\adv) 
        \mathop{\supseteq}
        \clockknowledge(\cmd, \mem, \events, \level_\adv, \ts')         
        $.

We can now give a more precise top-level security condition:
\begin{definition}[Timing-sensitive security with declassification and locally-bound progress-insensitivity] \label{def:btni}
  Given a program~$\cmd$, initial memory $\mem$, initial program counter label $\pc$, and initial clock $\ts$ such that
  $$
    \configClock{\cmd}{\mem}{\pc}{\ts}
    \stepsmanywith{\events \cdot \tsEvent{\ts'}{\alpha}}
    \configClock{\cmd'}{\mem'}{\pc'}{\ts'}
  $$
  define the run as \emph{secure} if it holds that
  for all $\level_{\adv}$, if $\levelof{\eventMeta} \flowsto \level_{\adv}$ then
  \begin{enumerate}
  \item if $\alpha = \EventDeclassify{\_}{\ell_{\mathit{auth}}}{\ell_{\mathit{from}} }{\ell_{\mathit{to}}}$
    then it should hold that:
    \begin{enumerate}
    \item
      $
      \begin{aligned}[t]
      \clockknowledge(\cmd, \mem, \events, \level_\adv, \ts')
      &\mathop{\supseteq} \progressknowledge(\cmd, \mem, \events, \level_\adv) \\
      &\mathop{\supseteq} \knowledge(\cmd, \mem, \events, \level_\adv)
      \end{aligned}
      $

    \item $\knowledge(\cmd, \mem, \events \cdot \tsEvent{\ts'}{\alpha}, \level_\adv) \supseteq \knowledge(\cmd, \mem, \events,\ell_{\mathit{auth}} \sqcup \ell_{\adv})$
    \end{enumerate}
  \item if $\alpha = \EventTiniExit{\eta}{\_}{\ell_{\mathit{from}}  }{\_}$
    then it should hold that:
    \begin{enumerate}
    \item $\knowledge(\cmd, \mem, \events \cdot \tsEvent{\ts'}{\eventMeta}, \level_\adv) \supseteq \clockknowledge(\cmd, \mem, \events, \ell_\adv, \ts')$
    \item $\clockknowledge(\cmd, \mem, \events, \ell_\adv, \ts') \supseteq \knowledge(\cmd, \mem, \events,\ell_{\mathit{auth}} \sqcup \ell_{\adv})$
    \end{enumerate}
  \item otherwise it should hold that:
    \begin{align*}
      \knowledge(\cmd, \mem, \events \cdot \tsEvent{\ts'}{\eventMeta}, \level_\adv)
      \supseteq \knowledge(\cmd, \mem, \events, \ell_\adv) \\
    \end{align*}
  \end{enumerate}
\end{definition}

Observe that Clause 3 of the above definition now requires timing-sensitivity since it explicitly states that an attacker must not learn anything from observing an event $\alpha$ and its timestamp. 
Another notable change is Clause 1a that specifies that the timing of a regular declassification must not convey information.
Finally, this definition also changes the semantics of the $\code{tini}$-construct (cf. Clause 2b).
Instead of declassifying the progress knowledge it now declassifies the timing behavior of the code block guarded by $\code{tini}$.

\section{Discussion}

\label{sec:future-work}

\paragraph{Dimensions and principles of declassification} 
The reframing of the progress-insensitive security as declassification allows us to think about it in terms of declassification principles and dimensions. The locality-driven aspect of our definition places it in a \emph{where} dimension, while the use of authority-based bounds naturally has a clear \emph{what} flavor.  While we do not specify any bounds on what information can be learned via a \code{tini}-declassification as long as the authority is sufficient, 
the prior work on tight specification of \emph{what} information is released through declassifications~\cite{AskarovSabelfeld,chudnov2018assuming} should compose with our definition.
Our authority model is inspired by the expressive label models such as DLM~\cite{dlm} and FLAM~\cite{flam}; and studying our condition in the formal frameworks of these label models will lead to \emph{who} characterizations of the \code{tini}-declassifications.

Another interesting angle to explore is the integration of integrity into the formal model, which would allow one to study the robustness~\cite{robustness} of declassifications via progress-insensitivity. Here, a potentially desirable semantic characterization is that attacker-controlled input does not influence information leaked through termination channels. A knowledge-based approach to robustness~\cite{AskarovMyers} can provide a starting point for such a definition.

With respect to the four principles of declassification, we believe that the principles of semantic consistency -- namely that security definition should be invariant under equivalence-preserving transformations -- and of conservativity -- namely that the definition of security should be a weakening of noninterference -- follow directly from the knowledge-based nature of the definition that is inherently attacker-driven~\cite{bastys2018prudent}. The principle of monotonicity of release -- namely that adding a declassification should not make a secure program insecure -- is also satisfied by our definition: adding a \code{tini} block to a program that is already accepted by Definitions~\ref{def:bpni} does not change how the definition treats this program, because all knowledge containments for the declassification cases are weaker than Clause 3 of the definition (a similar argument applies to the normal declassification). Finally, our definition also satisfies the non-occlusion principle -- namely, that the presence of declassifications should not mask other covert leaks. This one has two subtleties. The first one is already discussed in Section~\ref{twoclauses}. The second one is that without Clause 1a of the definition, we would have violated non-occlusion, as examples that reach an explicit declassification after a high loop would have been accepted. 

Similar arguments apply to Definition~\ref{def:btni}.

\paragraph{Design principle for pc-declassifications}
In the information flow community, pc-declassifications have a poor reputation because their security characterization has been not well understood. Our work provides a principle for understanding security of pc-declassifications that can answer the following question:
\emph{given a programming language or a system that has a primitive for pc-declassification, how dangerous is it?} The key to answering this question is bounding the progress knowledge.

If the security of pc-declassification can be characterized as a bound on progress knowledge -- as we do in Definition~\ref{def:bpni} -- then these pc-declassifications are as dangerous as leaks through progress. However, if progress knowledge cannot be bounded, then these pc-declassifications are more dangerous. For example, in a system designed to allow any pc-declassifications, programs such as
    $$\IfThenElse{h}{\PCDecl{} {\mathit{H}_\mathit{auth}} {L};  \Assign{l}{0}}{\PCDecl{} {\mathit{H}_\mathit{auth}} {L}; \Assign{l}{1}}$$ can  leak information indirectly more efficiently than just encoding the secret in the length of the trace.

\paragraph{Access control to authority} Neither our security policy nor the language provides guarantees about programs that misuse authority if they have access to it. To that extent, our approach leaves it to the programmers to ensure that untrusted code does not have access to authority above the code's intended security clearance. However, the capability-based nature of the authority means that a complementary technique for principled control of capabilities can be used. One candidate approach is the work by \textcite*{Dimoulas} that uses access control and integrity policies to restrict capability use.
Another is the mechanism of  bounded privileges for LIO proposed by \textcite*{ItsMyPrivelege}.

\paragraph{Enforcement techniques} We choose a simple runtime monitor to showcase the enforcement of the new definition. While the monitor is fully dynamic and flow-insensitive, we believe that other single-trace monitoring techniques such as hybrid information flow monitors~\cite{Moore,AskarovMantelChong,JsFlow} as well as Denning-style static techniques can be easily adapted. 
Static approaches may have an added benefit of helping infer the location of \code{tini} statements.
An interesting prospect for future work is extension of monitors designed for declassification for secure multi execution~\cite{willard,limin} to enforce our definition.

\paragraph{Timing treatment}
Our treatment of timing-sensitivity in Section~\ref{sec:timingsensitivity} via a simple step counter is admittedly academic, given the plethora of architectural and runtime side channels today. We nevertheless believe that the formulation of the timing-sensitive condition is useful, and can be combined with other proposals to mitigate practical timing attacks such as predictive mitigation~\cite{askarov2010predictive,zhang2011predictive,zhang2012language}.

\section{Implementation experience}
\label{sec:implementation}
We implemented the \code{tini}-based enforcement as a part of Troupe~\cite{Troupe}.
%
%Since this setting allows sending code between potentially untrusted nodes,
%it is not sufficient to use a progress-insensitive monitor.
%On the other hand, since the language both allows blocking behavior and possible divergence, it is hard to precisely enforce progress-sensitive security in this setting.
%The monitor therefore needs to be very conservative and accrues all taint from branching points in the program.
%In this setting, it is crucial to have a way for the programmer to downgrade the taint accumulated from processing sensitive data.
%
This language enforces progress-sensitive security, but 
allows \code{tini}-scoped initialization as a variation of let-declarations
\begin{lstlisting}[language=troupe]
let tini auth  (* tini declaration *)
    val v1 = e1
    val v2 = e2
    ...
in (* the point of pcdecl *)
    e
end 
\end{lstlisting}
This construct declassifies the termination of the initialization expressions \code{e1, e2,...}
using authority \code{auth} before evaluating the body~\code{e}.

%\subsection{News widget example}
Figure~\ref{fig:news_widget:troupe} presents a
snippet from the code of the news widget 
example in our language.
The top listing is the source of the news widget itself. 
When invoked with the favorite topic and its current state as arguments,
it updates the counter, fetching
updated news from the remote servers if necessary. Finally, it returns the 
result together with the updated state. Fetching the news is potentially blocking and
 implemented in the function \code{fetch\_news} (omitted from the listing but it uses the networking primitives of the language). The \code{news} value is an associative list, and the secret-dependent lookup
is done using the built-in function \code{list\_lookup\_with\_default}.
The initial state of the widget is an empty list, with the counter set to zero. The security
level of the initial state is \code{NEWS}.

The bottom listing in the figure displays how 
this widget is used by user at level \code{ALICE}. The important part is the invocation of the 
\code{news\_widget} is placed in the \code{let\ tini} block with attenuated authority \code{NEWS}, which 
limits the termination leakage of the \code{news\_widget} function.

\begin{figure}
\begin{lstlisting}[language=troupe]
fun news_widget fav state =
  let 
    val (news, update_counter) = state
    val news = if update_counter %10 = 0
                 then fetch_news() (* Blocking *)
                 else news
    val update_counter = update_counter + 1
    (* Operation on the secret *)
    val fav_news = list_lookup_with_default 
                           news fav "no news"
  in (fav_news, (news,update_counter))
  end
val init_state = ([], 0) raisedTo {NEWS}
\end{lstlisting}
\begin{lstlisting}[language=troupe]
(* Receiving widget and initial state *)
val (news_widget, state0) = fetch_widget ()

(* Usage of the widget by user ALICE *)
val news_auth = attenuate(rootauth, {NEWS})
val fav_topic = "#politics" raisedTo {ALICE}

(* Calling untrusted widget code *)
val (fav_news1, state1) =
       let tini news_auth
         val res = news_widget fav_topic state0
       in res 
       end
\end{lstlisting}
\caption{News widget (top) and its usage (bottom) code snippets\label{fig:news_widget:troupe}}
\end{figure}

The actual example is about 80 lines of code. As another data point for the readers, a different case study
in our language of roughly 500LOC uses the \code{let\ tini} construct 9 times.

\section{Related work}
\label{sec:related-work}

\paragraph{pc-declassification}
Jif provides a mechanism for pc-downgrading in the form of a declassify statement that lowers that \code{pc}-label
that is tracked by the type system. Unlike other features of Jif that are proven sound, e.g., dependent labels~\cite{DependentLabels} or robust declassification~\cite{ChongMyers}, there is no soundness theorem for the pc-declassifications.

Both the Asbestos~\cite{asbestos} and the HiStar~\cite{Histar} operating systems also allow downgrading of the control-flow.
In Asbestos a process with \textit{privilege}, a related notion to our authority, can \textit{decontaminate} other processes' \textit{send} label which
has the effect of allowing the other process to ``forget'' that it has previously seen secret data from the privileged process.
In our setup, this corresponds to passing an authority that allows declassifying control-flow up to the senders level.
HiStar similarly makes it possible to lower the accrued taint by passing on untainting gates that act as a capability for lowering the $\pc$.
Both of these systems provide this functionality because it is a practical feature to have,
but neither of them presents a security condition that encapsulates what this feature entails regarding leakage.

Chandra and Franz~\cite{info-flow-java-vm} present an information flow framework for the Java Virtual Machine with a hybrid monitoring that uses a static analysis to reason about when it is safe to declassify the $\pc$.
Similarly to earlier work by, for example, Denning~\cite{Denning:1977:CPS:359636.359712}, they statically find the immediate postdominator (the nearest join-point that all execution paths must pass through) to any branch-point and insert a $\pc$-lowering command at this point.
Their security condition is intended to only allow lowering the $\pc$ when no knowledge is revealed by doing so,
but since they are in a setting where almost any bytecode can throw unchecked exceptions, this is not generally feasible.
Instead, they disregard all implicit flows through unchecked exceptions and accept these leaks as a limitation of the security the system provides.
We believe one could extend this line of work by applying our bound on what is learned through such flows,
and thereby gain a stronger guarantee for the system as a whole.

The idea of control flow declassification also appears in the discussions of information flow control vs. 
taint tracking. For example, \textcite*{ExplicitSecrecy} use an observational approach where every branch decision is declassified.

\paragraph{Knowledge-based policies}
The methodology and the experience of this paper is in line with the argument by \textcite*{broberg2015anatomy} that epistemic specifications is the most natural way to specify information flow properties: 

\begin{displayquote}
The notion of security intrinsically has nothing to do with observing two separate runs – but rather what can be deduced from observing a single run. [\dots] A two-run formulation could certainly be very useful as part of the strategy to prove e.g. the correctness of an enforcement mechanism. [\dots] But that property is then only a stepping stone, and should, for completeness, be shown to imply the natural epistemic property.
\end{displayquote}
In our case, it is the operational security (cf. Lemma~\ref{lem:NI-bridge}) that has the two-run formulation.

The knowledge-based approach we use in this work follows the style of definitions of gradual release~\cite{Gradualrelease}. 
Logical epistemic approaches include the work by \textcite{halpern2002secrecy} that use epistemic logic to specify noninterference, and that of \textcite*{balliu2011epistemic} that uses epistemic temporal logic used to reason about knowledge acquired by observing program outputs.

\textcite{chudnov2018assuming} define an epistemic semantics for relational assumptions and guarantees in a progress-insensitive setting. To specify the allowed knowledge at a particular point in the trace they define a notion of release policy of a trace, where relational assumptions are interpreted as an annotation permitting the attacker to learn new information. The insight of our work suggests the direction of lifting their approach into a progress-sensitive setting and treating progress leaks as another form of relational assumptions.

McCall et al.~\cite{limin} propose a model for enforcing information flow control in the setting of webpages that must handle execution of untrusted scripts.
Their approach enforces robust declassification such that untrusted code cannot influence what is declassified by extending prior work on secure-multi-execution.
They show their enforcement sound with respect to a knowledge-based progress-insensitive noninterference condition with declassification.
They also present a progress-sensitive notion of noninterference, but restrict their focus to the weaker progress-insensitive condition, because IO-operations can use potentially looping event handlers that leak information through progress (a design decision somewhat reminiscent of the scenario in the Introduction). 
In the context of their work, the bridge between progress-sensitive and progress-insensitive security provided by our definition, can allow programmers to explicitly state when, and how much, information an event handler is allowed to leak through divergence.

\paragraph{Leakage via termination}

\citeauthor{moore2012precise}~\cite{moore2012precise} propose a type-based enforcement combined with a runtime mechanism for budgeting the amount of information leaked through termination at runtime. The idea is to use a termination oracle that uses maximum available runtime public information to deduce the termination behavior of secret-dependent code. The budgets mechanism allows for a quantitative interpretation of the leakage.

\paragraph{Untrusted code}
LIO~\cite{LIO}, MAC~\cite{mac}, and related programming models side step the issue of label creep via a programming discipline where 
high computations are forked into separate processes. A consequence of this programming model however is that consuming the result of the forked computation requires process synchronization followed by 
explicit declassification.
Fabric~\cite{liu2017fabric} contains a number of mechanisms for confining untrusted code downloaded over a network, including limits on authority that the code can use and access labels that limit when the untrusted code can read remote objects. As Fabric is based on Jif, it also places timing and progress channels outside of its threat model.

\section{Conclusion}
\label{sec:conclusion}
This paper proposes two novel knowledge-based security conditions that capture the semantic meaning of declassifying the progress knowledge in information flow control systems.
While many language-based and architectural systems allows such declassification there is, to the best of our knowledge, no formal characterization of it.
We present a language construct, $\code{tini}$, that exactly captures the embedding of progress-insensitive code in a stricter setting and show how this can be used in the presence of potentially blocking or diverging untrusted code.
We furthermore show that our conditions are enforceable by a mostly standard dynamic monitor.
For future work we conjecture that our epistemic definitions can form a foundation for further studies by extending it with
for example integrity and robust downgrades, principled usage of authority-capabilities, or more elaborate label models.
%\
Finally, we believe that a large body of techniques that rely on progress-insensitive security can use the insight of our work to accommodate stronger adversary models.

\section{Acknowledgements}
We thank Mathias Vorreiter Pedersen for his help with the technical aspects of this work at 
an earlier stage, 
Alix Trieu and Andrei Sabelfeld for their comments and insights, and the anonymous reviewers for their suggestions for improving the presentation of this paper.
 This work is supported by the DFF project 6108-00363 from The Danish Council for Independent Research for the Natural Sciences (FNU) and Aarhus University Research Foundation.

% \clearpage

\printbibliography

\opt{extended}{% \makeatletter
% \renewenvironment{description}%
%                {\list{}{\leftmargin=5pt % <------- Adjust this length
%                         \labelwidth\z@ \itemindent-\leftmargin
%                         \let\makelabel\descriptionlabel}}%
%                {\endlist}
% \makeatother
\onecolumn
\pagebreak
\setlist[description]{style=nextline,leftmargin=1cm}
\appendix
The rest of this document serves to prove \fullref{thm:soundness-bpni}.
To do so, we first provide a few auxiliary definitions and lemmas leading up to the proofs.

\subsection{Well formed expressions}
We restrict the occurrence of $\PCDecl{\eta}{\ell_1}{\ell_2}$ such that the usage is well-bracketed w.r.t the operational semantics:
\begin{figure}[h]
  \begin{mathpar}
    \inferrule
    {
      ~
    }
    {
      \declWF{\PCDecl{\eta}{\ell_1}{\ell_2}}
    }
    \and
    \inferrule
    {
      \cmd \not = \cmd_1;\cmd_2\\
      \pcdeclfree{\cmd}
    }
    {
      \declWF{\cmd}
    }
    \and
    \inferrule
    {
      \declWF{\cmd_1}\\
      \declWF{\cmd_2}
    }
    {
      \declWF{\cmd_1;\cmd_2}
    }
  \end{mathpar}
\end{figure}

\begin{lemma}[Well formedness is preserved by the semantics]\label{lem:wf-preserved}
  for any command $c$, if $\declWF{c}$ and
  $\configThree{c}{m}{\pc} \stepsmanywith{\events} \configThree{c'}{m'}{\pc'}$
  then $\declWF{c'}$ holds.
\end{lemma}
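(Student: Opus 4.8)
The plan is to reduce the multi-step claim to a single-step statement and then dispatch the semantic rules by case analysis. First I would note that it suffices to prove \emph{single-step preservation}: if $\declWF{c}$ and $\configThree{c}{m}{\pc} \stepsto \configThree{c'}{m'}{\pc'}$ then $\declWF{c'}$. The stated multi-step result $\configThree{c}{m}{\pc} \stepsmanywith{\events} \configThree{c'}{m'}{\pc'} \implies \declWF{c'}$ then follows by induction on the length of the reduction sequence: the empty reduction is the trivial base case, and each additional step is discharged by single-step preservation composed with the induction hypothesis.

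Before attacking single-step preservation I would establish the auxiliary fact that every $\code{pcdecl}$-free command is well formed, i.e. $\pcdeclfree{c} \implies \declWF{c}$. This goes by structural induction on $c$: if $c = c_1; c_2$ then both components are $\code{pcdecl}$-free, so the induction hypothesis gives $\declWF{c_1}$ and $\declWF{c_2}$, and the third well-formedness rule yields $\declWF{c_1;c_2}$; if $c$ is not a sequence, the second rule applies directly since $c$ is $\code{pcdecl}$-free. This observation is what lets me dispatch every rule whose right-hand side stays within the surface language.

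The core argument is single-step preservation itself, which I would prove by induction on the derivation of the step with a case split on the last rule applied. The rules for $\Skip$, assignment, and $\code{decl}$ all produce $\Stop$, which is $\code{pcdecl}$-free and not a sequence, hence well formed by the second rule; the $\code{pcdecl}$ rule likewise produces $\Stop$. The rules for $\While{e}{c}$, $\IfThenElse{e}{c_1}{c_2}$, and $\Eval{e}{X}$ produce commands that remain $\code{pcdecl}$-free — crucially, a freshly parsed command is $\code{pcdecl}$-free because $\code{pcdecl}$ is not part of the surface syntax — so the auxiliary lemma applies. For the two sequencing rules I use inversion: $\declWF{c_1;c_2}$ can only have been derived by the third rule, so $\declWF{c_1}$ and $\declWF{c_2}$ both hold. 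The rule that completes $c_1$ leaves $c_2$, already known to be well formed, while the rule that advances $c_1$ to some $c_1' \neq \Stop$ leaves $c_1';c_2$; there I obtain $\declWF{c_1'}$ from the induction hypothesis applied to the immediate sub-derivation $c_1 \stepsto c_1'$ and close with the third rule.

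I expect the genuinely load-bearing case to be the $\Tini$ rule, because it is the only rule that introduces a $\code{pcdecl}$ into the configuration, and well-bracketedness of $\code{pcdecl}$ is exactly the invariant at stake. Here $\Tini{\eta}{\ell_{\mathit{to}}}{e}{c}$ steps to $c; \PCDecl{\eta}{\ell}{\ell_{\mathit{to}}}$; since the original $\Tini$ command is $\code{pcdecl}$-free, its body $c$ is $\code{pcdecl}$-free and thus well formed by the auxiliary lemma, while $\PCDecl{\eta}{\ell}{\ell_{\mathit{to}}}$ is well formed by the first rule, so the third rule gives $\declWF{c; \PCDecl{\eta}{\ell}{\ell_{\mathit{to}}}}$. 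The only remaining point that warrants a moment's care is the $\Eval$ case noted above, where one must appeal to the fact that the parser cannot emit the internal $\code{pcdecl}$ construct.
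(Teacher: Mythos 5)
Your proof is correct and is essentially the detailed version of the paper's own argument, which the authors compress to ``immediate by induction on $c$'': you factor the claim into single-step preservation plus induction on the reduction length, handle the single step by case analysis on the semantic rules (structurally recursing into the sequencing case), and correctly isolate the two points that need care, namely the auxiliary fact that $\code{pcdecl}$-free commands are well formed and the $\code{tini}$ case that introduces a fresh, well-bracketed $\code{pcdecl}$. No gaps; all rules of the operational semantics are covered.
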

\begin{proof}
  Immediate by induction on $c$.
\end{proof}

\subsection{Indistinguishability relations}
% NOTE: already in main paper:
% We consider equivalence relations $\indmeta$ on memories. We write
% $\indmeta \subseteq \loweq{\ell}$ when $\indmeta$ is
% smaller than low-equivalence at some level $\ell$.

\begin{definition}[Indistinguishability propagation by bridge]
  Given an indistinguishability relation $\indmeta$, define
  \emph{indistinguishability propagation} from configuration
  with command $c$ and pc register $\pc$, denoted
  $\indprop{\indmeta}{c}{\pc}{\ell}{\alpha_1 \dots \alpha_n}$ to be the relation
  such that
  \begin{mathpar}
    \inferrule{
      m \mathrel{I} s
    }{
      m \indprop{\indmeta}{c}{\pc}{\ell}{\emptylist} s
    }
    \and
    \inferrule{
      m \mathrel{I} s
      \and
      \syncConfig{c}{m}{s}{\pc} \syncbridge{\ell}{\alpha_1 \dots \alpha_k}
      \syncConfig{c'}{m'}{s'}{\pc''}
    }{
      m' \indprop{\indmeta}{c}{\pc}{\ell}{\alpha_1 \dots \alpha_k} s'
    }
  \end{mathpar}
\end{definition}

\begin{lemma}[Preservation of $\ell$-equivalence by bridge propagation]
  \label{lemma:propagation-ell-equivalence}
  If $\indmeta \subseteq \loweq{\ell}$ then
  $\indprop{\indmeta}{c}{\pc}{\ell}{\alpha_1 \dots \alpha_k} \subseteq \loweq{\ell}$.
\end{lemma}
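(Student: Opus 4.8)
The plan is to invert the two rules defining the propagation relation $\indprop{\indmeta}{c}{\pc}{\ell}{\alpha_1 \dots \alpha_k}$ and reduce the only substantial case to a preservation statement for synchronized bridging. The empty-list case is immediate: $m' \indprop{\indmeta}{c}{\pc}{\ell}{\emptylist} s'$ can only arise from the first rule, so $m' \mathrel{\indmeta} s'$, and the hypothesis $\indmeta \subseteq {\loweq{\ell}}$ gives $m' \loweq{\ell} s'$ at once. For a non-empty list the second rule supplies witnesses $m,s$ with $m \mathrel{\indmeta} s$ --- hence $m \loweq{\ell} s$ --- together with a synchronized bridge $\syncConfig{c}{m}{s}{\pc} \syncbridge{\ell}{\alpha_1 \dots \alpha_k} \syncConfig{c'}{m'}{s'}{\pc''}$, and the goal collapses to $m' \loweq{\ell} s'$. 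So the whole lemma rests on the claim that synchronized bridging preserves $\loweq{\ell}$.

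I would establish that claim by induction on the derivation of the synchronized bridge. The concatenation rule splits $\alpha_1 \dots \alpha_k$ into a single head step and a shorter tail; applying the single-step case to the head produces an intermediate $\loweq{\ell}$-equivalent pair, after which the induction hypothesis closes the tail. Everything therefore reduces to the single synchronized bridge step $\syncConfig{c}{m}{s}{\pc} \syncbridge{\ell}{\alpha} \syncConfig{c'}{m'}{s'}{\pc'}$, which unfolds into two individual bridge steps $\configThree{c}{m}{\pc} \lbridgestep{n}{\ell}{\alpha} \configThree{c'}{m'}{\pc'}$ and $\configThree{c}{s}{\pc} \lbridgestep{n'}{\ell}{\alpha} \configThree{c'}{s'}{\pc'}$ that share the event $\alpha$ and the target $c',\pc'$. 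I would decompose each individual bridge step into its secret prefix (the non-$\ell$-observable steps) and its final observable-or-terminal step.

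The secret prefixes preserve the $\ell$-visible part of each run's own memory. Inspecting the operational semantics, the only memory-mutating rules are assignment and declassification, and each emits an event of level $\levelof{x}$; hence a non-observable event $\beta$ (with $\levelof{\beta} \nflowsto \ell$) can only touch a variable with $\levelof{x} \nflowsto \ell$, while $\code{pcdecl}$ and all $\EventEmpty$-steps leave the memory untouched. Consequently the two memories reached just before the final step are still $\loweq{\ell}$-equivalent, and it remains to check that the shared final event induces the same $\ell$-visible update in both runs: for $\EventAssign{x}{v}$ the event itself fixes the written value $v$, so both write the same $v$ and $\loweq{\ell}$ is maintained when $\levelof{x} \flowsto \ell$; for $\EventTiniExit{\eta}{\_}{\_}{\_}$ and $\code{pcdecl}$ the memory is unchanged; and a terminal non-observable event again modifies only a variable above $\ell$.

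I expect the declassification subcase $\EventDeclassify{x}{\_}{\_}{\_}$ with $\levelof{x} \flowsto \ell$ to carry the real proof burden and to be the main obstacle: here one must argue that the \emph{released value} agrees across the two runs, since the surrounding structural bookkeeping (tracking that the residual command and $\pc$ coincide, and invoking Lemma~\ref{lem:wf-preserved} to keep $\code{pcdecl}$ occurrences well-bracketed) is routine. This is precisely the point at which the value determined by the observable declassification event must be used, and it connects to the replay-at-a-raised-level technique --- synchronizing at $\level_{\mathit{auth}} \sqcup \level_\adv$ --- that the main security lemma relies on to force earlier, otherwise-invisible declassifications to match. Isolating exactly what about the declassification event pins the low-visible result down is where I would concentrate the care.
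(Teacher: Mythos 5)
Your decomposition is the same as the paper's: the paper proves this lemma by induction on $k$, unfolds the two rules exactly as you do, and then disposes of the inductive case in a single sentence --- ``since the bridge relations update the memories with the same $\ell$-equivalent events, it must be that $m' \loweq{\ell} s'$.'' Your handling of the secret prefixes and of the $\EventAssign{x}{v}$, $\code{pcdecl}$, and terminal non-observable cases is a correct and more explicit elaboration of that sentence.

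The declassification subcase you flag as the main obstacle is, however, a genuine gap --- and not one you can close by being more careful, because the step fails. A declassify event $\EventDeclassify{x}{\level_{\mathit{auth}}}{\level_{\mathit{from}}}{\level_{\mathit{to}}}$ does not record the written value (unlike $\EventAssign{x}{v}$), and the monitor only requires $\level_{\mathit{from}} \flowsto \level_{\mathit{to}} \sqcup \level_{\mathit{auth}}$, so $\level_{\mathit{from}}$ need not lie below $\ell$ even when $\levelof{x} \flowsto \ell$. Concretely, take $c$ to be $\Declassify{m}{\mathit{auth}_H}{M}{h}$, take $\ell = M$, and take two initial memories that agree up to $M$ but differ on $h$: both runs emit the identical event $d(m, H, M)$ and reach the same command $\Stop$ and the same $\pc$, so they synchronize and the propagation relation relates the final memories, yet those memories differ on the $M$-level variable $m$. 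This contradicts the lemma as stated, and it is consistent with the paper's own operational security lemma, whose third case explicitly declines to conclude $m' \loweq{\level_\adv} s'$ after an observable declassify event. The natural repair is to make the declassify event carry the released value, so that the synchronized bridge genuinely ``synchronizes the memories'' as the paper's prose around the two-step $h$-to-$m$-to-$l$ example intends, or else to weaken the lemma's conclusion on variables written by declassifications. Your instinct that this is ``precisely the point at which the value determined by the observable declassification event must be used'' is exactly right --- the trouble is that the event, as defined, determines no value.
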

\begin{proof}By induction on $k$. We examine the inductive case, as the base case is straightforward.

  Consider $m', s'$ such that $m' \mathrel{\indprop{\indmeta}{c}{\pc}{\ell}{\alpha} \subseteq \loweq{\ell}} s'$.
  Unfolding the definitions, it must be that there are
  $m$ and $s$ such that $m \loweq{\ell} s$ and $m \mathrel{I} s$.
  Since the bridge relations update the memories with the same
  $\ell$-equivalent events, then it must be that $m' \loweq{\ell} s'$.
\end{proof}

\begin{lemma}[Restriction monotonicity]
  $\indnarr{\indmeta}{c}{\pc}{\ell}{\eventMeta} \subseteq \indmeta$.
\end{lemma}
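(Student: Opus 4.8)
The plan is to read the claim directly off Definition~\ref{def:indnar}. The relation $\indnarr{\indmeta}{c}{\pc}{\ell}{\alpha_1 \dots \alpha_k}$ is defined by exactly two inference rules, and the crucial observation is that \emph{both} rules carry $m \mathrel{\indmeta} s$ as a premise. The base rule (empty event list $\emptylist$) has $m \mathrel{\indmeta} s$ as its sole premise, and the step rule (nonempty list $\alpha_1 \dots \alpha_k$) has $m \mathrel{\indmeta} s$ as a premise in addition to the synchronized-bridging condition. Since the conclusion of both rules is a pair $(m,s)$ in the restricted relation, the restriction can never relate a pair that $\indmeta$ does not already relate.

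Concretely, I would take an arbitrary pair $(m,s)$ with $m \mathrel{\indnarr{\indmeta}{c}{\pc}{\ell}{\alpha_1 \dots \alpha_k}} s$ and perform a case analysis on the last rule used to derive this membership. In the case of the empty sequence, the derivation is the base rule, whose premise is precisely $m \mathrel{\indmeta} s$, so we are done. In the case of a nonempty sequence $\alpha_1 \dots \alpha_k$, the derivation is the step rule; among its premises is $m \mathrel{\indmeta} s$ (the synchronized-bridging premise being irrelevant for this inclusion), so again $(m,s) \in \indmeta$. Since $(m,s)$ was arbitrary, we conclude $\indnarr{\indmeta}{c}{\pc}{\ell}{\alpha_1 \dots \alpha_k} \subseteq \indmeta$.

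There is no genuine obstacle here: the lemma is immediate by inversion on the two defining rules, and the only thing worth stating explicitly is that the restriction is designed to \emph{shrink} $\indmeta$ by adding the bridging constraint, never to enlarge it, which is exactly what the shared premise $m \mathrel{\indmeta} s$ guarantees. The lemma's role is purely instrumental: it records this monotonicity fact so that later, when the indistinguishability restriction $\indnarr{\loweq{\ell}}{c}{\pc}{\ell}{\dots}$ is used as the precondition relation in Lemma~\ref{lem:NI-bridge}, one may freely replace it by $\loweq{\ell}$ wherever only the weaker (larger) relation is needed, e.g.\ when combining it with Lemma~\ref{lemma:propagation-ell-equivalence}.
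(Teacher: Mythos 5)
Your proof is correct and matches the paper's argument, which simply states that the inclusion is immediate from Definition~\ref{def:indnar}; you spell out the same observation that both defining rules carry $m \mathrel{\indmeta} s$ as a premise, so inversion gives the inclusion directly.
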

\begin{proof}
  Immediate from the definition of $\indnarr{\cdot}{c}{\pc}{\ell}{\eventMeta}$
\end{proof}

\begin{lemma}[Sequence decomposition]
  \label{seq_decomp}
  Suppose
  $\configThree{c_1;c_2}{m}{\pc} \lbridgestep{n}{\ell}{\alpha} \configThree{c'}{m'}{\pc'}$,
  then one of the following holds
  \begin{enumerate}
  \item $\configThree{c_1}{m}{\pc} \lbridgestep{n}{\ell}{\alpha} \configThree{c_1'}{m'}{\pc'}$
    and $c' = c_1';c_2$
  \item $\configThree{c_1}{m}{\pc} \lbridgestep{n_1}{\ell}{\beta} \configThree{\Stop}{m_1}{\pc_1}$ and
    $\configThree{c_2}{m_1}{\pc_1} \lbridgestep{n_2}{\ell}{\alpha} \configThree{c'}{m'}{\pc'}$ where $n = n_1 + n_2 + 1$ and $\nobs{\beta}{\ell}$.
  \end{enumerate}
\end{lemma}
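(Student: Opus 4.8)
The plan is to induct on the bridge index $n$, using at each step the single-step decomposition of a sequential composition provided by the two sequencing rules of Fig.~\ref{fig:semantics}: a step of $\configThree{c_1;c_2}{m}{\pc}$ is either a step of $c_1$ to some $c_1' \neq \Stop$ (yielding residual $c_1';c_2$ with the same event), or a step of $c_1$ to $\Stop$ (yielding residual $c_2$ with the same event). The guiding intuition is that a bridge of $c_1;c_2$ either stays entirely inside $c_1$ — landing in the first disjunct — or reaches a unique ``crossover'' step at which $c_1$ terminates and control passes to $c_2$ — landing in the second disjunct, where the $+1$ in $n = n_1 + n_2 + 1$ accounts for that crossover step.

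For the base case $n=0$ the bridge is a single step, derived either by the observable rule (event $\alpha$ with $\obs{\levelof{\alpha}}{\ell}$) or by the silent-termination rule (event $\beta$ with $\nobs{\levelof{\beta}}{\ell}$ and target $\Stop$). Decomposing that single step via the sequencing rules: if it is a step of $c_1$ to $c_1' \neq \Stop$, the same step is itself a one-step bridge of $c_1$, so $\configThree{c_1}{m}{\pc} \lbridgestep{0}{\ell}{\alpha} \configThree{c_1'}{m'}{\pc'}$ with $c' = c_1';c_2$, giving the first disjunct. If it is a silent step of $c_1$ to $\Stop$, then $c_1$ bridges to $\Stop$ with $n_1 = 0$ via the silent-termination rule and the remaining computation of $c_2$ supplies the second disjunct.

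For the inductive step $n = n'+1$ the bridge begins with a silent step $\configThree{c_1;c_2}{m}{\pc} \stepswith{\beta} \cfgmeta_2$ followed by a bridge $\cfgmeta_2 \lbridgestep{n'}{\ell}{\alpha} \configThree{c'}{m'}{\pc'}$. I again decompose the first step. If $c_1$ steps to $c_1' \neq \Stop$, then $\cfgmeta_2 = \configThree{c_1';c_2}{m_2}{\pc_2}$ and I invoke the induction hypothesis on its bridge; in either returned disjunct I prepend the silent step to the $c_1$-side bridge using the composition rule of Fig.~\ref{fig:bridge-step}, incrementing $n$ (first disjunct) or $n_1$ (second disjunct) by one and leaving the $c_2$-side untouched — a quick index check confirms the totals agree. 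If instead $c_1$ steps silently to $\Stop$, then $\cfgmeta_2 = \configThree{c_2}{m_2}{\pc_2}$, the first step is already a one-step silent bridge of $c_1$ to $\Stop$, and the tail is exactly a bridge of $c_2$, yielding the second disjunct with $n_1 = 0$ and $n_2 = n'$.

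The delicate point — and the main obstacle — is the boundary case in which the \emph{observable} final step of the bridge is itself the step that terminates $c_1$ (e.g.\ $c_1$ an assignment, whose observable step goes directly to $\Stop$ and lifts via the terminating sequencing rule to $\configThree{c_1;c_2}{m}{\pc} \stepswith{\alpha} \configThree{c_2}{m'}{\pc'}$). Here $c_1$ bridges to $\Stop$ with the observable event $\alpha$, so it matches neither the silent-crossover shape required by the second disjunct (whose $\beta$ must satisfy $\nobs{\levelof{\beta}}{\ell}$) nor, read literally, the residual shape $c_1';c_2$ of the first. I would reconcile it with the first disjunct by taking $c_1' = \Stop$ and identifying $\Stop;c_2$ with $c_2$ — consistently with the terminating sequencing rule, which collapses a composition whose left component has terminated to its right component. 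Making this identification explicit (or, equivalently, routing this lone case through a dedicated clause) is the one place where the otherwise routine bookkeeping needs genuine care.
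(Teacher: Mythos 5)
Your proof is correct and matches the paper's approach in substance: the paper dismisses this lemma with a one-line ``by inspection of the rules in the bridge relation and the associated rules of the operational semantics,'' and your induction on $n$, splitting the first step of the bridge according to which sequencing rule derived it, is exactly that inspection carried out in full. The boundary case you flag --- an observable step that simultaneously terminates $c_1$ (e.g.\ a low assignment as the final command of $c_1$), which fits neither disjunct read literally since the crossover event is not silent and the residual is $c_2$ rather than $c_1';c_2$ --- is a genuine imprecision in the statement that the paper's proof silently glosses over, and your resolution (taking $c_1'=\Stop$ in the first disjunct and identifying $\Stop;c_2$ with $c_2$, consistently with the terminating sequencing rule) is the right one and agrees with how the lemma is actually used in the proof of Lemma~\ref{lem:NI-bridge}.
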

\begin{proof}
  By inspection of the rules in the bridge relation and the associated rules of the operational semantics.
\end{proof}

\begin{lemma}[Equivalent runs are synchronized]
  \label{lem:runs-to-sync-bridge}
  If we have runs
  $\configThree{c}{m}{\pc} \stepsmanywith{\events} \configThree{c_1}{m_1}{\pc_1}$ and
  $\configThree{c}{s}{\pc} \stepsmanywith{\events'} \configThree{c_2}{s_2}{\pc_2}$,
  where the initial memories satisfy that
  $m \loweq{\level} s$
  and their traces are equal up to some level $\level$, $\tproj{\events}{\level} = \tproj{\events'}{\level}$, then there exists $c'$, $m'$, $s'$, and $\pc'$ such that
  $$\syncConfig{c}{m}{s}{\pc} \syncbridge{\ell}{\alpha_1, \alpha_2, \ldots, \alpha_k} \syncConfig{c'}{m'}{s'}{\pc'}$$
  where
  $\tproj{\events}{\level} = \tproj{\events'}{\level} = [\alpha_1, \alpha_2, \ldots, \alpha_k]$
  and $m' \loweq{\level} m_1$ and $s' \loweq{\level} s_1$.
\end{lemma}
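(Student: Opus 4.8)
The plan is to induct on the number $k$ of $\level$-observable events, i.e.\ on the common length of $\tproj{\events}{\level} = \tproj{\events'}{\level}$, which is exactly the length of the synchronized bridge that must be exhibited. First I would establish a \emph{run-decomposition} fact: any run $\configThree{c}{m}{\pc} \stepsmanywith{\events} \configThree{c_1}{m_1}{\pc_1}$ factors into $k$ consecutive bridge steps, each ending at one of the observable events $\alpha_1, \dots, \alpha_k$, followed by a tail of small steps that emit only $\level$-unobservable events. This is proved by greedily grouping each maximal block of unobservable small steps together with the following observable step; finiteness of the run guarantees that the grouping terminates, and Lemma~\ref{seq_decomp} supplies the bookkeeping across sequential composition. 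The base case $k = 0$ is degenerate --- both runs then consist solely of $\level$-unobservable steps, so the claim holds with the empty bridge (mirroring the $\emptylist$ case of Definition~\ref{def:indnar}), taking $c' = c$, $m' = m$, $s' = s$, $\pc' = \pc$, since unobservable steps preserve $\level$-equivalence of memories.

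The heart of the argument is a single-bridge synchronization step. Assuming both runs sit at a common command $c$ and program counter $\pc$ with $\level$-equivalent memories, and each has first observable event $\alpha_1$ (identical, by the equal-trace hypothesis), I would peel the first bridge step of each run and show that they reach a \emph{common} continuation $c^{(1)}$ and $\pc^{(1)}$ --- which is precisely what the base rule of synchronized bridging (Fig.~\ref{fig:synchronized-bridging}) demands, the two memories being permitted to differ. Two ingredients drive this. First, \emph{confinement}: a step taken under $\pc \not\sqsubseteq \level$ emits no $\level$-observable event and leaves the program counter $\not\sqsubseteq \level$, the sole exception being a \code{pcdecl} exit with $\level_{\mathit{to}} \sqsubseteq \level$. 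Second, \emph{low-determinacy}: under $\level$-equivalent memories a low-guarded branch evaluates its guard to the same value (low expressions agree on $\level$-equivalent memories), so both runs take the same branch, whereas a high-guarded branch raises $\pc$ above $\level$ and, by confinement together with the well-bracketedness of \code{pcdecl} maintained by Lemma~\ref{lem:wf-preserved}, reconverges to the same continuation before any further observable event is produced. Combining the two, both runs traverse the identical low-guarded control skeleton and reconverge after every high region, so they agree on $c^{(1)}$ and $\pc^{(1)}$ and emit the same $\alpha_1$. One could alternatively route the common-continuation claim through cases~\ref{bridge-item:3} and~\ref{bridge-item:4} of Lemma~\ref{lem:NI-bridge} (which assume only $\level$-equivalence) plus determinism of the monitor; crucially, cases~\ref{bridge-item:1} and~\ref{bridge-item:2} are \emph{not} applicable here, as they presuppose the stronger indistinguishability restriction in place of plain $\level$-equivalence.

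The main obstacle is the induction step across a \emph{declassification}. When $\alpha_1 = \EventDeclassify{x}{\level_{\mathit{auth}}}{\_}{\level_{\mathit{to}}}$ with $\level_{\mathit{to}} \sqsubseteq \level$, the two runs may write different values into the now-visible $x$ (the declassify event records no value), so $m^{(1)} \loweq{\level} s^{(1)}$ can fail and the lemma cannot simply be re-invoked on the residual runs from $c^{(1)}$, whose hypothesis $m \loweq{\level} s$ is no longer available. I would resolve this by generalizing the induction hypothesis so that the maintained invariant is the equal-observable-trace assumption --- which does survive --- rather than memory $\level$-equivalence; the latter is used only locally, to decide each low-guarded branch, and wherever a post-declassification low variable would genuinely steer a subsequent low branch apart, the equal-trace hypothesis forbids it, since diverging low branches under a low $\pc$ would yield differing observable continuations. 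Making this precise requires a careful simultaneous induction over both runs and is the technically delicate part of the proof. Finally, the \emph{within-run} equivalences $m' \loweq{\level} m_1$ and $s' \loweq{\level} s_2$ --- and note the conclusion never asserts $m' \loweq{\level} s'$ across the two runs --- follow immediately from confinement applied to the unobservable tails isolated by the run-decomposition, since those steps touch only variables whose level is $\not\sqsubseteq \level$.
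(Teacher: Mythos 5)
Your proposal goes far beyond what the paper actually writes down: the paper's proof of this lemma is a single sentence appealing to determinism of the monitored semantics and to the claim that ``observable events capture all observable changes to memories.'' Your decomposition of each run into $k$ bridge steps plus an unobservable tail, the degenerate base case, and the confinement/low-determinacy argument for a single synchronization step are all sound, and they are essentially an honest expansion of what that one-liner must mean. Your remark that only cases~\ref{bridge-item:3} and~\ref{bridge-item:4} of Lemma~\ref{lem:NI-bridge} are usable here (since only plain $\loweq{\level}$ is available, not the indistinguishability restriction) is also correct.

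The genuine gap is the declassification case, which you rightly single out as the delicate point but do not close. You correctly observe that $\EventDeclassify{x}{\level_{\mathit{auth}}}{\level_{\mathit{from}}}{\level_{\mathit{to}}}$ records no value, so after a declassification into a level $\sqsubseteq \level$ the two memories may disagree on a $\level$-visible variable and the $\loweq{\level}$ hypothesis driving low-determinacy is lost. Your proposed repair --- that the equal-trace hypothesis $\tproj{\events}{\level} = \tproj{\events'}{\level}$ forbids the runs from subsequently branching apart, ``since diverging low branches under a low $\pc$ would yield differing observable continuations'' --- does not hold: two branches can emit \emph{identical} observable event sequences while leaving \emph{syntactically different} residual commands, and the synchronized-bridge rule demands a common continuation $c'$ and $\pc'$, not merely a common event. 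Concretely, for $\Declassify{l}{\mathit{auth}_H}{L}{h};\ \IfThenElse{l}{(\Assign{l'}{0};\Skip)}{\Assign{l'}{0}}$ run with $h=1$ and with $h=0$, both runs project to the same $L$-trace consisting of the declassify event followed by $\EventAssign{l'}{0}$, yet the second bridge step lands on $\Skip$ in one run and on $\Stop$ in the other, so no synchronized bridge for the second event exists. Your generalized induction hypothesis therefore cannot be discharged as sketched; a correct argument must either strengthen what is known about the memories at re-entry after a declassification (e.g., carry the indistinguishability restriction at level $\level_{\mathit{auth}} \sqcup \level$, which is exactly the device Lemma~\ref{lem:NI-bridge} uses when it needs to resynchronize declassified values) or restrict the lemma to the particular invocation contexts of the top-level proof. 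The paper's one-line proof silently assumes this point away --- its stated justification is literally false for declassify events, which do not capture the written value --- so its brevity is not evidence that the gap is easy to fill.
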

\begin{proof}
  Follows from determinism of the operational semantics and the fact that observable events capture all observable changes to memories.
\end{proof}

\begin{lemma}[Noninterference of expressions]\label{lem:NI-expressions}
  Given an expression $\expr$ and two memories $\mem_1$ and $\mem_2$ such that $\mem_1 \loweq{\level} \mem_2$.
  If
  $\configTwo{\expr}{\mem_1} \yields \LabeledValue{\basemeta_1}{\ell_1}$
  and
  $\configTwo{\expr}{\mem_2} \yields \LabeledValue{\basemeta_2}{\ell_2}$
  then
  $\typeof{\basemeta_1} = \typeof{\basemeta_2}$
  and
  $(\ell_1 \sqsubseteq \ell \lor \ell_2 \sqsubseteq \ell) \implies (\basemeta_1 = \basemeta_2 \land \ell_1 = \ell_2)$.
\end{lemma}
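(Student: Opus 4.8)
The plan is to prove the statement by structural induction on the expression $\expr$, with one case for each of the four rules of Fig.~\ref{fig:semantics:expressions}. Throughout I rely on the standing convention that memories are well typed, i.e. $\typeof{m(x)} = \typeof{x}$ for every variable $x$; this is what lets me discharge the \emph{unconditional} type-equality obligation in the variable case. The overall shape is the familiar one for expression noninterference: the type clause is proved outright in every case, and the value/level clause is proved under the level premise by pushing it through the subexpressions.

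For the base cases, when $\expr$ is a literal base value $\basemeta$ both derivations use the first rule, so $\basemeta_1 = \basemeta_2 = \basemeta$ and $\ell_1 = \ell_2 = \bot$, and all conclusions are immediate. When $\expr$ is a variable $x$, both derivations use the variable rule, giving $\basemeta_i = m_i(x)$ and $\ell_1 = \ell_2 = \levelof{x}$; the type equality follows from the typing convention, and whenever $\levelof{x} \sqsubseteq \ell$ the memory equivalence $m_1 \loweq{\ell} m_2$ yields $m_1(x) = m_2(x)$, i.e. $\basemeta_1 = \basemeta_2$ (with $\ell_1 = \ell_2$ trivially).

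For the binary-operation case $\expr = \expr_1 \oplus \expr_2$, determinism forces both derivations to evaluate $\expr_1$ and $\expr_2$; write their sub-results as $\LabeledValue{\basemeta_i^a}{\ell_i^a}$ and $\LabeledValue{\basemeta_i^b}{\ell_i^b}$ in memory $i$, so $\ell_i = \ell_i^a \sqcup \ell_i^b$. The type clause follows from the induction hypotheses on $\expr_1$ and $\expr_2$ (their operand types agree across the two memories) together with the fact that $\oplus$ fixes the result type from the operand types. For the value/level clause, observe that $\ell_i \sqsubseteq \ell$ forces $\ell_i^a \sqsubseteq \ell$ and $\ell_i^b \sqsubseteq \ell$ since the join lies above each component; feeding these into the induction hypotheses gives $\basemeta_1^a = \basemeta_2^a$, $\basemeta_1^b = \basemeta_2^b$ and equality of the corresponding sublevels, whence $\basemeta_1 = \basemeta_1^a \oplus \basemeta_1^b = \basemeta_2^a \oplus \basemeta_2^b = \basemeta_2$ and $\ell_1 = \ell_2$. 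The attenuation case $\expr = \Attenuate{\expr_1}{(\level_{\mathit{auth}_2}, p_2)}$ is the easiest: the result is an authority whose level $\level_{\mathit{auth}_2}$ and purpose bit are fixed syntactically by the expression, hence identical in both memories whenever evaluation succeeds, settling the type equality and $\basemeta_1 = \basemeta_2$ outright; its level equals the level of $\expr_1$, so the induction hypothesis on $\expr_1$ gives $\ell_1 = \ell_2$ under the level premise.

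The only real subtlety is keeping the type-equality clause unconditional: for variables it rests on the well-typed-memory convention, and for binary operations on the determinism of $\oplus$ at the type level. Once these are in place, the disjunctive premise $\ell_1 \sqsubseteq \ell \lor \ell_2 \sqsubseteq \ell$ is handled uniformly, since in each inductive case whichever disjunct holds already forces equality of the component levels, which then forces the other disjunct and hence $\ell_1 = \ell_2$; I therefore expect no genuine difficulty beyond bookkeeping.
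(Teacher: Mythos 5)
Your proof is correct and takes essentially the same approach as the paper, which simply states ``straightforward induction on the evaluation rules''; your case analysis over the four rules of Fig.~\ref{fig:semantics:expressions} fills in exactly that induction, including the correct handling of the disjunctive level premise. The only point worth noting is that you make explicit the well-typed-memory invariant ($\typeof{m(x)} = \typeof{x}$) needed for the unconditional type-equality clause in the variable case, an assumption the paper leaves implicit but which is preserved by the assignment and declassification rules.
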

\begin{proof}
  Straightforward induction on the evaluation rules.
\end{proof}

\begin{lemma}[Barring $\code{pcdecl}$ commands, $\pc$ never decreases]\label{lem:pc-increases-pcdeclfree}
  Given a bridge-step $\configThree{c}{m}{\pc} \lbridgestep{n}{\ell}{\alpha} \configThree{c'}{m'}{\pc'}$ where $\pc \not \sqsubseteq \ell$ and $\pcdeclfree{c}$
  then it holds that $\pc' \not \sqsubseteq \ell$
\end{lemma}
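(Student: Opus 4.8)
The plan is to prove a stronger, single-step invariant and then lift it to the bridge relation. The obstacle with the bare statement is that it is not inductive on its own: the only rule of Fig.~\ref{fig:semantics} that lowers the $\pc$ is the one for $\PCDecl{\eta}{\level_{\mathit{auth}}}{\pc_{\mathit{to}}}$, which resets the $\pc$ to $\pc_{\mathit{to}}$, and $c$ stops being $\pcdeclfree{}$ as soon as the first $\code{tini}$ unfolds into a $\code{pcdecl}$. So I would strengthen the hypothesis to keep track of the targets of all the $\code{pcdecl}$-commands that the residual program may still execute.

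Concretely, I define the predicate $R_\ell(\pc, c)$ to hold iff $\pc \not\sqsubseteq \ell$ and every subterm of $c$ of the form $\PCDecl{\eta}{\level_{\mathit{auth}}}{\pc_{\mathit{to}}}$ satisfies $\pc_{\mathit{to}} \not\sqsubseteq \ell$. First I would establish that $R_\ell$ is preserved by a single step: if $\configThree{c}{m}{\pc} \stepswith{\alpha} \configThree{c'}{m'}{\pc'}$ and $R_\ell(\pc, c)$, then $R_\ell(\pc', c')$. This goes by case analysis on the applied rule of Fig.~\ref{fig:semantics}. For $\Skip$, $\Assign{x}{e}$, $\While{e}{\cmd}$, the $\code{if}$-rule, $\code{eval}$, and the $\code{decl}$-rule, the $\pc$ never decreases (the $\code{if}$- and $\code{eval}$-rules even raise it to $\pc \sqcup \level$, and $\pc \not\sqsubseteq \ell$ gives $\pc \sqcup \level \not\sqsubseteq \ell$ since $\pc \sqsubseteq \pc \sqcup \level$), so $\pc' \not\sqsubseteq \ell$ is immediate; moreover the residual command is either $\Stop$ or a subterm of $c$ — and for $\code{eval}$ the freshly parsed command $\parseCmd{s}$ lies in the surface language and is therefore $\pcdeclfree{}$ — so its $\code{pcdecl}$-targets form a subset of those of $c$ and the second conjunct is inherited. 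The two sequencing rules follow by applying the inductive hypothesis to the stepped subcommand and noting that the $\code{pcdecl}$-subterms of $c_1; c_2$ are exactly those of $c_1$ together with those of $c_2$.

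The two load-bearing cases are $\code{tini}$ and $\code{pcdecl}$. The $\code{tini}$-rule rewrites $\Tini{\eta}{\level_{\mathit{to}}}{e}{\cmd}$ to $c; \PCDecl{\eta}{\level}{\level_{\mathit{to}}}$ without changing the $\pc$, thereby introducing a new $\code{pcdecl}$ whose target is $\level_{\mathit{to}}$; but its premise requires $\level' \flowsto \pc$ and $\pc \flowsto \level_{\mathit{to}}$, i.e.\ $\pc \sqsubseteq \level_{\mathit{to}}$, so transitivity of $\sqsubseteq$ together with $\pc \not\sqsubseteq \ell$ forces $\level_{\mathit{to}} \not\sqsubseteq \ell$, and the new target satisfies the second conjunct. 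The $\code{pcdecl}$-rule is precisely where the $\pc$ can drop: it sets $\pc' = \pc_{\mathit{to}}$, but by $R_\ell(\pc, c)$ this very $\code{pcdecl}$ already has $\pc_{\mathit{to}} \not\sqsubseteq \ell$, so $\pc' \not\sqsubseteq \ell$; its residual is $\Stop$, for which $R_\ell(\pc', \Stop)$ holds with its second conjunct vacuous. This closes single-step preservation.

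Finally I would lift to the bridge relation of Fig.~\ref{fig:bridge-step} by induction on its derivation (equivalently, on the index $n$): each bridge-step is a finite composition of single steps, so iterating the preservation result propagates $R_\ell$ from start to finish. The initial hypotheses $\pc \not\sqsubseteq \ell$ and $\pcdeclfree{c}$ give $R_\ell(\pc, c)$ outright — the second conjunct being vacuous because $c$ has no $\code{pcdecl}$-subterms — and hence $R_\ell(\pc', c')$ at the endpoint, of which $\pc' \not\sqsubseteq \ell$ is the first conjunct, as required. I expect the only genuinely subtle point to be identifying the strengthened invariant $R_\ell$; once the $\code{tini}$-guard $\pc \flowsto \level_{\mathit{to}}$ is recognized as the mechanism that keeps every reachable $\code{pcdecl}$-target above $\ell$, the rest is a routine structural induction.
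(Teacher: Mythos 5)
Your proof is correct and takes essentially the same route as the paper, whose entire argument is the one-liner ``straightforward induction on the bridge-step relation.'' Your strengthened invariant $R_\ell(\pc,c)$ --- tracking that every residual $\code{pcdecl}$ target stays above $\ell$, justified by the $\pc \flowsto \level_{\mathit{to}}$ premise of the $\code{tini}$ rule --- is exactly the detail the paper's sketch leaves implicit, and it is genuinely needed, since after a $\code{tini}$ unfolds the residual command is no longer $\pcdeclfree{}$ and the naive induction hypothesis would not apply.
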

\begin{proof}
  Straightforward induction on the bridge-step relation.
\end{proof}

\begin{lemma}[Observable (non-$\EventTiniExit{}{}{}{}$)-events are only emitted in low contexts]\label{lem:obs-event-implies-low-pc}
  Given a bridge-step $\configThree{c}{m}{\pc} \lbridgestep{n}{\ell}{\alpha} \configThree{c'}{m'}{\pc'}$ where $\pc' \sqsubseteq \ell$ and $\alpha$ is not an observable $\EventTiniExit{}{}{}{}$-event,
  it holds that $\pc \sqsubseteq \ell$.
\end{lemma}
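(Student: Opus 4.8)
The plan is to prove the statement by induction on the derivation of the bridge-step $\configThree{c}{m}{\pc} \lbridgestep{n}{\ell}{\alpha} \configThree{c'}{m'}{\pc'}$, that is, on the three rules of Fig.~\ref{fig:bridge-step}. The whole argument turns on one elementary observation about \emph{single} operational steps, which I would establish first by inspecting Fig.~\ref{fig:semantics}: for any step $\configThree{c}{m}{\pc} \stepswith{\gamma} \configThree{c'}{m'}{\pc'}$, either (i)~$\gamma$ is a tini-exit event, which only a \code{pcdecl} reduction emits (possibly lifted through the sequencing rules) and whose level equals the resulting program counter, $\levelof{\gamma} = \pc'$; or (ii)~$\gamma$ is not a tini-exit event, in which case the program counter cannot decrease, $\pc \sqsubseteq \pc'$. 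Disjunct (ii) is the single-step analogue of Lemma~\ref{lem:pc-increases-pcdeclfree}: apart from \code{pcdecl}, every rule either leaves $\pc$ unchanged (\code{skip}, assignment, \code{while}, \code{decl}, and the \code{tini} reduction) or raises it (the \code{if} and \code{eval} rules), with the sequencing rules inheriting the property from their premises.

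With this observation in hand the induction is short. In the base case given by the first bridge rule, the single emitted event $\alpha$ is observable; since by hypothesis it is not an \emph{observable} tini-exit event, it is not a tini-exit event at all, so disjunct (ii) gives $\pc \sqsubseteq \pc' \sqsubseteq \ell$. In the base case given by the second bridge rule the single step terminates into $\Stop$ while emitting $\alpha$, which is non-observable; here I would argue by contradiction that the step is not a \code{pcdecl} reduction, since otherwise disjunct (i) would force $\levelof{\alpha} = \pc' \sqsubseteq \ell$, making $\alpha$ observable and contradicting $\nobs{\levelof{\alpha}}{\ell}$, after which disjunct (ii) again yields $\pc \sqsubseteq \pc' \sqsubseteq \ell$. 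In the inductive case (third bridge rule) a non-observable step $\configThree{c}{m}{\pc} \stepswith{\beta} \configThree{c_2}{m_2}{\pc_2}$ precedes a shorter bridge to the same final configuration; the induction hypothesis applies verbatim --- the side conditions $\pc' \sqsubseteq \ell$ and ``$\alpha$ not an observable tini-exit event'' are untouched --- and delivers $\pc_2 \sqsubseteq \ell$, whereupon the same contradiction rules out $\beta$ being a tini-exit event and disjunct (ii) gives $\pc \sqsubseteq \pc_2 \sqsubseteq \ell$.

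The only genuinely load-bearing point, and the one I would be most careful to state precisely, is the identity $\levelof{\EventTiniExit{\eta}{\_}{\_}{\pc_{\mathit{to}}}} = \pc_{\mathit{to}}$ linking the level of a tini-exit event to the program counter it installs. This is exactly what prevents the program counter from ``falling silently'': the only steps that lower $\pc$ are \code{pcdecl} reductions, and each such step is observable precisely when the pc it lowers to is observable. Everything else --- the case split, the pc-monotonicity of the non-\code{pcdecl} rules, and the propagation through sequencing --- is routine. I would also remark that the degenerate lattice $\ell = \top$ needs no special treatment, since there $\pc \sqsubseteq \ell$ holds trivially in every case.
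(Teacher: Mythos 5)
Your proof is correct and follows exactly the route the paper takes: the paper's own proof is simply ``straightforward induction on the bridge-step relation,'' and your argument fills in that induction faithfully, with the key single-step observation (that the only pc-lowering steps are \code{pcdecl} reductions whose emitted tini-exit event has level equal to the resulting pc) being precisely the fact that makes the induction go through.
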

\begin{proof}
  Straightforward induction on the bridge-step relation.
\end{proof}

\subsection{Proof of operational definition}
\textit{Proof of \autoref{lem:NI-bridge}:}
Given
$\configThree{c}{m}{\pc} \lbridgestep{n}{\ell_\adv}{\alpha} \configThree{c'}{m'}{\pc'}$,
we proceed by strong induction in $n$.
\begin{description}
\item[For $n = 0$:]
  We have that $\configThree{c}{m}{\pc} \lbridgestep{0}{\ell_\adv}{\alpha} \configThree{c'}{m'}{\pc'}$
  which, by inversion, entails that we must either have that $c' =\Stop$ and $\alpha \not \flowsto \ell_\adv$ or that $\alpha \flowsto \ell_\adv$.
  \begin{description}
  \item[$\alpha \flowsto \ell_\adv$:]
    We have the following cases for $\alpha$:
    \begin{description}
    \item [Case $\alpha = \EventAssign{x}{v}$:]
      Since $c$ emits $\EventAssign{x}{v}$ in a single evaluation step it must be the case that
      $c$ is an assignment $\Assign{x}{e}$ where $\configTwo{e}{m} \yields v$.
      From \fullref{lem:NI-expressions} we have that
      $\configTwo{e}{s} \yields v$
      so it follows trivially that
      $\configThree{c}{s}{\pc} \lbridgestep{0}{\ell_\adv}{\EventAssign{x}{v}} \configThree{c'}{s'}{\pc'}$
      where $m' \loweq{\level_\adv} s'$ which is what we need to prove.
    \item [Case $\alpha = \EventDeclassify{x}{\ell_{\mathit{auth}}}{\ell_{\mathit{from}}}{\ell_{\mathit{to}}}$:]
      Since $c$ emits $\EventDeclassify{x}{\ell_{\mathit{auth}}}{\ell_{\mathit{from}}}{\ell_{\mathit{to}}}$ in a single evaluation step it must be the case that
      $c$ is a declassify command, $\Declassify{x}{e_{\mathit{auth}}}{\level_{\mathit{to}}}{e_v}$,
      $\configTwo{e_v}{m} \yields \LabeledValue{v}{\ell_{\mathit{from}}}$
      and
      $\configTwo{e_{\mathit{auth}}}{m} \yields \LabeledValue{\AuthorityVal{\level_{\mathit{auth}}}{1}}{\level'}$
      .
      Furthermore, it must be the case that $\level_{\mathit{from}} \flowsto \level_{\mathit{auth}} \sqcup \level_{\adv}$ and
      $\level' \flowsto \pc$.

      We have two cases to show:
      \begin{description}
      \item[Case~\ref{bridge-item:1}:]
        Suppose we have
        $s$ such that $s \loweq{\level_{\mathit{auth}} \sqcup \level_{\adv}} m$.
        We need to show that
        $\configThree{\Declassify{x}{e_{\mathit{auth}}}{\level_{\mathit{to}}}{e_v}}{s}{\pc}
        \lbridgestep{0}{\ell_\adv}{\EventDeclassify{x}{\ell_{\mathit{auth}}}{\ell_{\mathit{from}}}{\ell_{\mathit{to}}}}
        \configThree{c'}{s'}{\pc'}$ and $s' \loweq{\level_\adv} m'$
        which follows from applying \fullref{lem:NI-expressions} on the evaluations of $e_v$ and $e_a$.
      \item[Case~\ref{bridge-item:3}:]
        Suppose we have $s$ such that $s \loweq{\level_\adv} m$.
        We need to show that 
        $\configThree{\Declassify{x}{e_{\mathit{auth}}}{\level_{\mathit{to}}}{e_v}}{s}{\pc}
        \lbridgestep{0}{\ell_\adv}{\EventDeclassify{x}{\ell_{\mathit{auth}}}{\ell_{\mathit{from}}}{\ell_t}}
        \configThree{c'}{s'}{\pc'}$.
        
        By \fullref{lem:NI-expressions} we have that
        $\configTwo{e_{\mathit{auth}}}{s} \yields \LabeledValue{\AuthorityVal{\level_{\mathit{auth}}}{1}}{\level'}$
        and by completeness of expression evaluation we have that
        $\configTwo{e_v}{m} \yields \LabeledValue{v'}{\ell_{\mathit{from}}}$
        so what we need follows directly from the semantics of the language.
      \end{description}
    \item [Case $\alpha = \EventTiniExit{\eta}{\ell_{\mathit{auth}}}{\ell_{\mathit{from}}}{\ell_{\mathit{to}}}$:]
      Command $c$ must only consist of a $\PCDecl{\eta}{\ell_{\mathit{auth}}}{{\level_{\mathit{to}}}}$ and $\pc \flowsto \level_{\mathit{auth}} \sqcup \level_\adv$
      so it follows trivially that for any
      $s$ such that $s \loweq{\level_{\mathit{auth}} \sqcup \level_\adv} m$ we have that
      $\configThree{ \PCDecl{\eta}{\ell_{\mathit{auth}}}{\ell_{\mathit{to}}}}{s}{\pc}
      \lbridgestep{0}{\ell_\adv}{\EventTiniExit{\eta}{\ell_{\mathit{auth}}}{\ell_{\mathit{from}}}{\ell_{\mathit{to}}}}
      \configThree{c'}{s'}{\pc'}$ and $s' \loweq{\level_\adv} m'$.
    \end{description}
  \item[$c' = \Stop$ and $\alpha \not \flowsto \ell_\adv$:]
    We have that $c' = \Stop$ and $\nobs\alpha{\level_\adv}$.
    We have two cases based on whether or not $\pc' \flowsto \level_\adv$.
    \begin{description}
    \item[Case $\pc' \flowsto \level_\adv$:]
      It must also be the case that $\pc \flowsto \level_\adv$,
      and it therefore easily follows that if $s \loweq{\level_\adv} m$
      we also have that $\configThree{c}{s}{\pc} \lbridgestep{0}{\level_\adv}{\alpha} \configThree{\Stop}{s'}{\pc'}$, where $s' \loweq{\level_\adv} m'$.
    \item[Case $\pc' \not \flowsto \level_\adv$:] It must be that
      $\configThree{c}{s}{\pc} \lbridgestep{0}{\level_\adv}{\alpha'} \configThree{c''}{s'}{\pc''}$.
      But since $\alpha$ is not observable neither is $\alpha'$ so $c'' = \Stop$.
      Similarly, $\alpha'$ cannot be emitted by a command that changed $\level_\adv$-parts of memory, so it also holds that $s' \loweq{\level_\adv} m'$.
      Finally, by examining the determinism of the operational semantics for $\level_\adv$-equivalent memories we get that $\pc'' \not \flowsto \level_\adv$.
    \end{description}
  \end{description}
\item[For $n = k + 1$:]
  We have
  $\configThree{c}{m}{\pc} \lbridgestep{k+1}{\ell_\adv}{\alpha} \configThree{c'}{m'}{\pc'}$.
  % which must be the result of applying \textsc{Bridge-Multi} so there must exist $c_1$, $m_1$ and $\pc_1$ such that
  % $\configThree{c}{m}{\pc} \stepswith{\beta} \configThree{c_1}{m_1}{\pc_1}$ where $\nobs{\beta}{\level_\adv}$ and
  % $\configThree{c_1}{m_1}{\pc_1} \lbridgestep{k}{\ell_\adv}{\alpha} \configThree{c_1}{m_1}{\pc_1}$.
  We proceed by induction in $c$.
  \begin{description}
  \item[$c$ is an assignment, declassify, or skip:]
    In all cases we get a contradiction since they all yield $\Stop$ in a single step which would mean that $k + 1 = 0$.
  \item[$c$ is sequence $d_1;d_2$:]
    We have four cases to prove:
    \begin{description}[style=nextline]
    \item [If $\alpha = \EventDeclassify{x}{\level_{\mathit{from}}}{\level_{\mathit{auth}}}{\level_{\mathit{to}}}$ and $\obs{\alpha}{\level_\adv}$:]
      We need to show Case~\ref{bridge-item:1} and Case~\ref{bridge-item:3}.
      \begin{description}
      \item[Proof of Case~\ref{bridge-item:1}:]
        Let
        $\indmeta = \indnarr{\loweq{\level_{\mathit{auth}} \sqcup \level_{\adv}}}{c}{\pc}{\level_{\mathit{auth}} \sqcup \level_{\adv}}{\beta_1 ,\ldots, \beta_j}$
        where
        \begin{multline*}
          \configThree{d_1;d_2}{m}{\pc} \lbridgestep{i_1}{\level_{\mathit{auth}} \sqcup \level_{\adv}}{\beta_1} \configThree{c_1}{m_1}{\pc_1} 
          \lbridgestep{i_2}{\level_{\mathit{auth}} \sqcup \level_{\adv}}{\beta_2} \ldots \lbridgestep{i_j}{\level_{\mathit{auth}} \sqcup
            \level_{\adv}}{\beta_j}
          \configThree{c_j}{m_j}{\pc_j}\\
          \lbridgestep{i'}{\level_{\mathit{auth}} \sqcup \level_{\adv}}{\EventDeclassify{x}{\level_{\mathit{from}}}{\level_{\mathit{auth}}}{\level_{\mathit{to}}}} \configThree{c'}{m'}{\pc'}
        \end{multline*}
        such that $i_1 + \ldots + i_j + j + i' = k + 1$
        and $s$ such that $m \mathrel{\indmeta} s$ be given.
        We now have two cases based on $j$:
        \begin{description}
        \item[Case $j$ is $0$:]
          Since there are no events observable at level $\level_{\mathit{auth}} \sqcup \level_{\adv}$ we have that
          $\configThree{d_1;d_2}{m}{\pc} \lbridgestep{k + 1}{\level_{\mathit{auth}} \sqcup \level_{\adv}}{\EventDeclassify{x}{\level_{\mathit{from}}}{\level_{\mathit{auth}}}{\level_{\mathit{to}}}} \configThree{c'}{m'}{\pc'}$.

          By applying \fullref{seq_decomp} we have two cases:
          \begin{description}
          \item[$\alpha$ is produced by $d_1$]
            We have
            $\configThree{d_1}{m}{\pc} \lbridgestep{k + 1}{\level_{\mathit{auth}} \sqcup \level_{\adv}}{\EventDeclassify{x}{\level_{\mathit{from}}}{\level_{\mathit{auth}}}{\level_{\mathit{to}}}} \configThree{d_1'}{m'}{\pc'}$ and $c' = d_1' ; d_2$.
            We directly get what we need by applying the inner induction hypothesis to this run.
          \item[$\alpha$ is produced by $d_2$]
            We have
            $$\configThree{d_1}{m}{\pc} \lbridgestep{k_1}{\level_{\mathit{auth}} \sqcup \level_{\adv}}{\beta} \configThree{\Stop}{m_1}{\pc_1}$$
            and
            $$\configThree{d_2}{m_1}{\pc_1} \lbridgestep{k_2}{\level_{\mathit{auth}} \sqcup \level_{\adv}}{\EventDeclassify{x}{\level_{\mathit{from}}}{\level_{\mathit{auth}}}{\level_{\mathit{to}}}} \configThree{c'}{m'}{\pc'}$$
            where $\nobs{\beta}{\level_{\mathit{auth}} \sqcup \level_{\adv}}$ and $k_1 + k_2 = k$.
            It follows from applying \fullref{lem:obs-event-implies-low-pc} to the run for $d_2$ that $\pc_1 \sqsubseteq \level_{\mathit{auth}} \sqcup \level_{\adv}$
            so we can apply the induction hypothesis (with the attacker level instantiated to ${\level_{\mathit{auth}} \sqcup \level_{\adv}}$) on the run for $d_1$ and obtain from Case~\ref{bridge-item:3} that
            $$\configThree{d_1}{s}{\pc} \lbridgestep{k_1}{\level_{\mathit{auth}} \sqcup \level_{\adv}}{\beta'} \configThree{\Stop}{s_1}{\pc_1}$$
            where
            $m_1 \loweq{\level_{\mathit{auth}} \sqcup \level_{\adv}} s_1$
            and using this we can apply the induction hypothesis to the $d_2$-run and obtain what we need from Case~\ref{bridge-item:1}.
          \end{description}
        \item[Case $j > 0 $:]
          From the definition of
          $\indnarr{\loweq{\level_{\mathit{auth}} \sqcup \level_{\adv}}}{c}{\pc}{\level_{\mathit{auth}} \sqcup \level_{\adv}}{\beta_1 ,\ldots, \beta_j}$
          we have that
          $\syncConfig{c}{m}{s}{\pc} \syncbridge{\level_{\mathit{auth}} \sqcup \level_{\adv}}{\beta_1 \dots \beta_j} \syncConfig{c_j}{m_j}{s_j}{\pc_j}$,
          and it therefore follows from \fullref{lemma:propagation-ell-equivalence} that $m_j \loweq{\level_{\mathit{auth}} \sqcup \level_{\adv}} s_j$.
          It must be the case that $i' < k+1$ and we can therefore apply the induction hypothesis on
          $\configThree{c_j}{m_j}{\pc_j} \lbridgestep{i'}{\level_{\mathit{auth}} \sqcup \level_{\adv}}{\alpha} \configThree{c'}{m'}{\pc'}$ to obtain what we need.
        \end{description}

      \item[Proof of Case~\ref{bridge-item:3}:]
        Suppose we are given $s$ where $m \loweq{\level_\adv} s$.
        We need to show that
        $$
        \configThree{d_1;d_2}{s}{\pc} \lbridgestep{n'}{\level_\adv}{\EventDeclassify{x}{\level_{\mathit{from}}}{\level_{\mathit{auth}}}{\level_{\mathit{to}}}}\configThree{c'}{s'}{\pc'}
        $$

        By applying \fullref{seq_decomp} we have two cases based on whether or not the event is produced by the first or second part of the sequential composition:
        \begin{description}
        \item[$\alpha$ is produced by $d_1$] Then what we need follows directly from applying the inner induction hypothesis.
        \item[$\alpha$ is produced by $d_2$]
          We have that

          $$\configThree{d_1}{m}{\pc} \lbridgestep{k_1}{\ell_\adv}{\beta} \configThree{\Stop}{m_1}{\pc_1}$$
          and
          $$\configThree{d_2}{m_1}{\pc_1} \lbridgestep{k_2}{\ell_\adv}{\alpha} \configThree{c'}{m'}{\pc'}$$
          where $\nobs{\beta}{\level_\adv}$.
          Since we end with a low program-counter, $\pc' \sqsubseteq \level_\adv$, it follows from applying \fullref{lem:obs-event-implies-low-pc} to the run for $d_2$ that $\pc_1 \sqsubseteq \level_\adv$.
          Therefore, by applying the inner induction hypothesis on
          $\configThree{d_1}{m}{\pc} \lbridgestep{k_1}{\ell_\adv}{\beta} \configThree{\Stop}{m_1}{\pc_1}$
          and from Case~\ref{bridge-item:3} we obtain run
          $\configThree{d_1}{s}{\pc} \lbridgestep{k_1'}{\ell_\adv}{\beta} \configThree{\Stop}{s_1}{\pc_1}$
          such that
          $m_1 \loweq{\level_\adv} s_1$.
          We can now apply the inner induction hypothesis on
          $\configThree{d_2}{m_1}{\pc_1} \lbridgestep{k_2}{\ell_\adv}{\alpha} \configThree{c'}{m'}{\pc'}$
          such that Case~\ref{bridge-item:1} gives us that
          $\configThree{d_2}{s_1}{\pc_1} \lbridgestep{k_1'}{\ell_\adv}{\alpha} \configThree{c'}{s'}{\pc'}$
          and $m' \loweq{\level_\adv} s'$
          which is exactly what we need.
        \end{description}
      \end{description}
    \item [If $\alpha = \EventTiniExit{\eta}{\level_{\mathit{auth}}}{\level_{\mathit{auth}} \sqcup \level_{\mathit{to}}}{\level_{\mathit{to}}}$ and $\obs{\alpha}{\level_\adv}$:]
      Let
      $\indmeta = \indnarr{\loweq{\level_{\mathit{auth}} \sqcup \level_{\adv}}}{c}{\pc}{\level_{\mathit{auth}} \sqcup \level_{\adv}}{\beta_1 ,\ldots, \beta_j}$
      where
      \begin{multline*}
        \configThree{c}{m}{\pc} \lbridgestep{i_1}{\level_{\mathit{auth}} \sqcup \level_{\adv}}{\beta_1} \configThree{c_1}{m_1}{\pc_1}
        \lbridgestep{i_2}{\level_{\mathit{auth}} \sqcup \level_{\adv}}{\beta_2} \ldots
        \lbridgestep{i_j}{\level_{\mathit{auth}} \sqcup \level_{\adv}}{\beta_j} \configThree{c_j}{m_j}{\pc_j}
      \end{multline*}
      and $s$ such that $m \mathrel{\indmeta} s$ be given.
      Similarly to the case above, we case on $j$; the number of intermediate events that have become observable at the higher attacker-level:
      \begin{description}
      \item[Case $j = 0$:]
        We have that
        $\configThree{d_1; d_2}{m}{\pc} \lbridgestep{k+1}{\level_{\mathit{auth}} \sqcup \level_{\adv}}{\EventTiniExit{\eta}{\level_{\mathit{auth}}}{\level_{\mathit{from}}}{\pc'}}\configThree{c'}{m'}{\pc'}$
        By applying \fullref{seq_decomp} we have two cases:
        \begin{description}
        \item[$\alpha$ is produced by $d_1$]
          We have that
          $$\configThree{d_1}{m}{\pc} \lbridgestep{k + 1}{\level_{\mathit{auth}} \sqcup \level_{\adv}}{\beta} \configThree{d_1'}{m'}{\pc'}$$
          and $c' = d_1'; d_2$, so we are done by aplying the inner induction hypothesis to this run.
        \item[$\alpha$ is produced by $d_2$]
          We have
          $$\configThree{d_1}{m}{\pc} \lbridgestep{k_1}{\level_{\mathit{auth}} \sqcup \level_{\adv}}{\beta} \configThree{\Stop}{m_1}{\pc_1}$$
          and
          $$\configThree{d_2}{m_1}{\pc_1} \lbridgestep{k_2}{\level_{\mathit{auth}} \sqcup \level_{\adv}}{\EventTiniExit{\eta}{\level_{\mathit{auth}}}{\level_{\mathit{from}}}{\pc'}} \configThree{c'}{m'}{\pc'}$$
          where $\nobs{\beta}{\level_{\mathit{auth}} \sqcup \level_{\adv}}$ and $k_1 + k_2 = k$.

          Since the $\code{pcdecl}$ that emits the
          ${\EventTiniExit{\eta}{\level_{\mathit{auth}}}{\level_{\mathit{from}}}{\pc'}}$-event
          is reached by the run in $m$ we know that the $\pc$ must have satisfied that $\pc \sqsubseteq \level_{\mathit{auth}} \sqcup \level_{\adv}$.
          Now since the $\pc$-label cannot decrease below $\level_{\mathit{auth}} \sqcup \level_\adv$ without emitting $\level_{\mathit{auth}} \sqcup \level_\adv$-observable events,
          it must also hold that $\pc_1 \sqsubseteq \level_{\mathit{auth}} \sqcup \level_{\adv}$.

          Furthermore, we know that $\beta$ cannot be a $\code{pcdecl}$ event $\EventTiniExit{\_}{\_}{\_}{\_}$ because then it would have been an observable event at level $\level_{\mathit{auth}} \sqcup \level_{\adv}$ which contradicts $j = 0$. 
          Hence, Case~\ref{bridge-item:3} of the induction hypothesis on the run for $d_1$ applies, and we therefore obtain that
          $$\configThree{d_1}{s}{\pc} \lbridgestep{k_1'}{\level_{\mathit{auth}} \sqcup \level_{\adv}}{\beta'} \configThree{\Stop}{s_1}{\pc_1}$$
          where $s_1 \loweq{\level_{\mathit{auth}} \sqcup \level_{\adv}} m_1$.
          This enables us to apply the induction hypothesis on run for $d_2$ and from Case~\ref{bridge-item:2} we obtain
          $$\configThree{d_2}{s_1}{\pc_1} \lbridgestep{k_2'}{\level_{\mathit{auth}} \sqcup \level_{\adv}}{\alpha'} \configThree{c'}{s'}{\pc'}$$
          where $s' \loweq{\level_{\mathit{auth}} \sqcup \level_{\adv}} m'$.
          It directly follows that $s' \loweq{\level_\adv} m'$ as well and furthermore we can combine the two runs above to obtain
          $$\configThree{d_1;d_2}{s}{\pc} \lbridgestep{k'}{\level_\adv}{\alpha'} \configThree{c'}{s'}{\pc'}$$
          as needed.
        \end{description}
      \item[Case $j > 0$:]
        From the definition of $\indnarr{\loweq{\level_{\mathit{auth}} \sqcup \level_{\adv}}}{c}{\pc}{\level_{\mathit{auth}} \sqcup \level_{\adv}}{\beta_1 ,\ldots, \beta_j}$
        we have that there exists
        $$ \syncConfig{c}{m}{s}{\pc} \syncbridge{\level_{\mathit{auth}} \sqcup \level_{\adv}}{\beta_1 \dots \beta_j} \syncConfig{c_j}{m_j}{s_j}{\pc_j} $$
        and furthermore we have that
        $$ \configThree{c_j}{m_j}{\pc_j} \lbridgestep{i'}{\level_{\mathit{auth}} \sqcup \level_{\adv}}{\EventTiniExit{\eta}{\level_{\mathit{auth}}}{\level_{\mathit{from}}}{\level_{\mathit{to}}}} \configThree{c'}{m'}{\pc'} $$
        Since $k>0$ it must be the case that $i' < k+1$ and since $I \subseteq (\loweq{\level_{\mathit{auth}} \sqcup \level_{\adv}})$ we get by \fullref{lemma:propagation-ell-equivalence} that
        $\indprop{\indmeta}{c}{\pc}{\level_{\mathit{auth}} \sqcup \level_{\adv}}{\beta_1 \dots \beta_j} \subseteq (\loweq{\level_{\mathit{auth}} \sqcup \level_{\adv}})$.
        Hence, since $m \mathrel{\indmeta} s$ we have that $m \loweq{\level_{\mathit{auth}} \sqcup \level_{\mathit{to}}} s$ and finally from that we obtain that
        $m \mathrel{\indprop{\indmeta}{c}{\pc}{\level_{\mathit{auth}} \sqcup \level_{\adv}}{\beta_1 \dots \beta_j}} s$.
        This enables us to conclude that $m_j \loweq{\level_{\mathit{auth}} \sqcup \level_{\adv}} s_j$ and we can therefore apply the induction hypothesis to
        $$ \configThree{c_j}{m_j}{\pc_j} \lbridgestep{i}{\level_{\mathit{auth}} \sqcup \level_{\adv}}{\EventTiniExit{\eta}{\level_{\mathit{auth}}}{\level_{\mathit{from}}}{\level_{\mathit{to}}}} \configThree{c'}{m'}{\pc'} $$
        and obtain what we need from Case~\ref{bridge-item:2}.
      \end{description}

    \item [If $\alpha \not = \EventTiniExit{}{}{}{}$ and $\pc' \sqsubseteq \level_\adv$: ]
      Suppose we are given $s$ such that $m \loweq{\level_\adv} s$.
      We need to show that there exists run
      $\configThree{d_1; d_2}{s}{\pc} \lbridgestep{k'}{\ell_\adv}{\alpha} \configThree{c'}{s'}{\pc'}$
      and, if $\alpha$ is not an observable declassify event, that $m' \loweq{\level_\adv} s'$.

      By applying \fullref{seq_decomp} we have two cases:
      \begin{description}
      \item[$\alpha$ is produced by $d_1$] Then what we need follows directly from applying the inner induction hypothesis.
      \item[$\alpha$ is produced by $d_2$]
        We then have that

        $$\configThree{d_1}{m}{\pc} \lbridgestep{k_1}{\ell_\adv}{\beta} \configThree{\Stop}{m_1}{\pc_1}$$
        and
        $$\configThree{d_2}{m_1}{\pc_1} \lbridgestep{k_2}{\ell_\adv}{\alpha} \configThree{c'}{m'}{\pc'}$$
        where $\nobs{\beta}{\level_\adv}$ and $k_1 + k_2 = k$.
        Since we end with a low program-counter, $\pc' \sqsubseteq \level_\adv$, it follows from \fullref{lem:obs-event-implies-low-pc} that $\pc_1 \sqsubseteq \level_\adv$.
        Therefore, by applying the inner induction hypothesis on
        $\configThree{d_1}{m}{\pc} \lbridgestep{k_1}{\ell_\adv}{\beta} \configThree{\Stop}{m_1}{\pc_1}$
        we obtain run
        $\configThree{d_1}{s}{\pc} \lbridgestep{k_1}{\ell_\adv}{\beta} \configThree{\Stop}{s_1}{\pc_1}$
        such that
        $m_1 \loweq{\level_\adv} s_1$.
        This further entails that $m_1 \mathrel{\indmeta} s_1$ so by applying the inner induction hypothesis on
        $\configThree{d_2}{m_1}{\pc_1} \lbridgestep{k_2}{\ell_\adv}{\alpha} \configThree{c'}{m'}{\pc'}$
        we obtain that
        $\configThree{d_2}{s_1}{\pc_1} \lbridgestep{k''}{\ell_\adv}{\alpha} \configThree{c'}{s'}{\pc'}$
        and $m' \loweq{\level_\adv} s'$
        which is exactly what we need.
      \end{description}
    \item [If $\alpha = \EventTiniExit{\eta}{\level_{\mathit{auth}}}{\level_{\mathit{from}}}{\level_{\mathit{to}}}$ or $\pc' \not \sqsubseteq \level_\adv$:]
      We are given run
      $\configThree{d_1;d_2}{s}{\pc} \lbridgestep{k'}{\ell_\adv}{\alpha'} \configThree{c''}{s'}{\pc''}$ such that $m \loweq{\level_\adv} s$.
      Then either it must be the case that $\nobs {\alpha} {\level}$ or $\alpha = \EventTiniExit{\eta}{\level_{\mathit{auth}}}{\level_{\mathit{from}}}{\level_{\mathit{to}}}$:
      \begin{description}
      \item[$\nobs {\alpha} {\level_\adv}$: ]
        It then follows that the finals memories must be related
        (since none of the run emit any observable events) and that $\pc'' \not \sqsubseteq \level_\adv$.

      \item [$\alpha = \EventTiniExit{\eta}{\level_{\mathit{auth}}}{\level_{\mathit{from}}}{\level_{\mathit{to}}}$:]
        We then have that $\alpha'$ must also be
        $\EventTiniExit{\eta}{\level_{\mathit{auth}}}{\level_{\mathit{from}}}{\level_{\mathit{to}}}$
        since otherwise it would have to be an unobservable event
        and $c''$ would have to be $\Stop$, which leads to a contradiction since $\declWF{d_1;d_2}$ holds and therefore the run in $s$ cannot ``step over'' the \code{pcdecl} command.
        It then also follows that $m' \loweq{\level_\adv} s'$ since both runs only emit unobservable events up to the $\EventTiniExit{\eta}{\level_{\mathit{auth}}}{\level_{\mathit{from}}}{\level_{\mathit{to}}}$ event
        and the commands $c'$ and $c''$ must be the same.
      \end{description}
    \end{description}
  \item[$c$ is conditional $\IfThenElse{\expr}{c_t}{c_e}$:]
    Suppose
    $\configTwo{\expr}{m} \yields \LabeledValue{\basemeta}{\level_v}$.
    We consider two cases based on whether or not $\level_v \flowsto \level_\adv$.
    \begin{description}
    \item[$\level_v \flowsto \level_\adv$:]
      From \fullref{lem:NI-expressions} we have that for any memory $s$ such that $m \loweq{\level_\adv} s$, it holds that
      $\configTwo{\expr}{s} \yields \LabeledValue{\basemeta}{\level_v}$.
      Hence, any other memory will also be able to step, and it will step to the same branch.
      After stepping to euther $c_t$ or $c_e$ we are done by applying the inner induction hypothesis.
    \item[$\level_v \not \flowsto \level_\adv$:]
      We know that after stepping to one of the branches,
      the program-counter will be $\pc \sqcup \level_v$ for which it holds that $\level_\adv \not \sqsubseteq \pc \sqcup \level_v$.
      Now since we know that $\declWF{c}$ holds, we know that the $\pc$ cannot go down in either branch,
      so we can apply Lemma~\ref{lem:pc-increases-pcdeclfree} to conclude that $\pc' \not \sqsubseteq \level_\adv$.
      Hence we know that $c' = \Stop$ and are given $s$ such that $m \loweq{\level_\adv} s$.
      We need to show
      \begin{multline*}
        \configThree{\IfThenElse{\expr}{c_t}{c_e}}{s}{\pc} \lbridgestep{k+1}{\ell_\adv}{\alpha'} \configThree{c''}{s'}{\pc''}
        \implies
        c' = c'' = \Stop \land m' \loweq{\level_\adv} s' \land \nobs{\alpha'}{\level_\adv}
      \end{multline*}
      which follows directly by applying the inner induction hypothesis to either of the branches that the run may step to.
    \end{description}
  \item[$c$ is $\While{\expr}{c_b}$:]
    Follows from unfolding a single evaluation step and applying the same reasoning as above for $\IfThenElse{\expr}{c_b}{\Skip}$ and sequences.
  \item[$c$ is $\Tini{\eta}{\level_{\mathit{to}}}{\expr_{\mathit{auth}}}{c_b}$:]
    Follows from unfolding a single evaluation step and applying the same reasoning as above for sequences.
  \item [$c$ is $\Eval {e} { \{x_1, \ldots, x_n \} }$:]
    By unfolding a single evaluation step we obtain that $e$ evaluates to a string that can be parsed as a command $c$.
    Now the reasoning is exactly the same as for conditionals $\IfThenElse{}{}{}$ since we are either in a case where $e$ is a ``low'' value and then any other memory will produce the same string
    and otherwise we are stepping to a ``high'' $\pc$ and we can again reason in the same fashion as for conditionals.
  \end{description}
\end{description}

\subsection{Proof of top-level definitions}

\paragraph{Proof of soundness for progress-sensitive NI with declassification and bounded PINI}
We are now in position to prove Theorem~\ref{thm:soundness-bpni}:
Suppose we have an attacker at level $\level$ and a run
\begin{equation*}
  \configThree{\cmd}{\mem}{\pc}
  \stepsmanywith{\events \cdot \alpha}
  \configThree{\cmd'}{\mem'}{\pc'}
\end{equation*}
where $\levelof{\eventMeta} \flowsto \level$.

Suppose
$\tproj{\events}{\level} = \beta_1, \beta_2, \ldots, \beta_j$:
Then it must be the case that
\begin{displaymath}
  \configThree{\cmd}{\mem}{\pc}
  \lbridgestep{i_1}{\level}{\beta_1}
  \configThree{\cmd_1}{\mem_1}{\pc_1}
  \lbridgestep{i_2}{\level}{\beta_2}
  \ldots
  \lbridgestep{i_j}{\level}{\beta_j}
  \configThree{\cmd_j}{\mem_j}{\pc_j}
  \lbridgestep{i}{\level}{\alpha}
  \configThree{\cmd'}{\mem'}{\pc'}
\end{displaymath}

We have three cases to show:
\begin{description}[style=nextline]
\item[$\alpha$ is $\EventDeclassify{x}{\ell_{\mathit{from}}}{\ell_{\mathit{auth}}}{\ell_{\mathit{to}}}$:]
  We need to show 2 conditions:
  \begin{align}
    \label{case:11} \progressknowledge(\cmd, \mem, \events, \level) \supseteq \knowledge(\cmd, \mem, \events, \level) \\
    \label{case:12} \knowledge(\cmd, \mem, \events \cdot \alpha, \level) \supseteq \knowledge(\cmd, \mem, \events, \ell_{\mathit{auth}} \sqcup \ell)
  \end{align}
  \begin{description}
  \item[$\progressknowledge(\cmd, \mem, \events, \level) \supseteq \knowledge(\cmd, \mem, \events, \level)$: ]
    To show Condition~\ref{case:11}, suppose $s \in \knowledge(\cmd, \mem, \events, \level)$.
    By unfolding the knowledge definition, this entails that
    $\configThree{c}{s}{\pc} \stepsmanywith{\events'} \configThree{c''}{s''}{\pc''}$
    where $\tproj{\events'}{\level} = \tproj{\events}{\level}$ and $m \loweq{\level} s$.
    So using \fullref{lem:runs-to-sync-bridge} we have that
    \begin{displaymath}
      \syncConfig{c}{m}{s}{\pc} \syncbridge{\ell}{\beta_1, \beta_2, \ldots, \beta_j} \syncConfig{c_j}{m_j}{s_j}{\pc_j}
    \end{displaymath}
    Which we can use to conclude that $m \indprop{\loweq{\level}}{c}{\pc}{\level}{\beta_1, \dots \beta_j} s$,
    so by \fullref{lemma:propagation-ell-equivalence} we have that $m_j \loweq{\level} s_j$.

    We need to show that $s \in \progressknowledge(\cmd, \mem, \events, \level)$, which now amounts to showing
    \begin{displaymath}
      \configThree{c_j}{s_j}{\pc_j} \lbridgestep{k}{\level}{\EventDeclassify{x}{\ell_{\mathit{from}}}{\ell_{\mathit{auth}}}{\ell_{\mathit{to}}}} \configThree{c'}{s'}{\pc'}
    \end{displaymath}
    Now, it must be the case that $\pc' \flowsto \level$ (since ${\EventDeclassify{x}{\ell_{\mathit{from}}}{\ell_{\mathit{auth}}}{\ell_{\mathit{to}}}}$ is observable),
    so this follows directly from Case~\ref{bridge-item:3} of \fullref{lem:NI-bridge}.

  \item[$\knowledge(\cmd, \mem, \events \cdot \alpha, \level) \supseteq \knowledge(\cmd, \mem, \events, \ell_{\mathit{auth}} \sqcup \ell)$:]
    To show Condition~\ref{case:12}, suppose $s \in \knowledge(\cmd, \mem, \events, \ell_{\mathit{auth}} \sqcup \ell)$.
    By unfolding the knowledge definition, this entails that
    $\configThree{c}{s}{\pc} \stepsmanywith{\events'} \configThree{c''}{s'}{\pc''}$
    where $\tproj{\events'}{\ell_{\mathit{auth}} \sqcup \ell} = \tproj{\events}{\level}$ and $m \loweq{\ell_{\mathit{auth}} \sqcup \ell} s$.
    So using \fullref{lem:runs-to-sync-bridge} we have that
    $$\syncConfig{c}{m}{s}{\pc} \syncbridge{\ell_{\mathit{auth}} \sqcup \ell}{\beta_1, \beta_2, \ldots, \beta_j} \syncConfig{c_j}{m_j}{s_j}{\pc_j}$$
    where
    $\configThree{c_j}{m_j}{\pc_j} \lbridgestep{i}{\level}{\EventDeclassify{x}{\ell_{\mathit{from}}}{\ell_{\mathit{auth}}}{\ell_{\mathit{to}}}} \configThree{c'}{m'}{\pc'} $

    We then need to show that
    $s \in \knowledge(\cmd, \mem, \events \cdot \alpha, \level)$,
    which now amounts to showing
    $$\configThree{c_j}{s_j}{\pc_j} \lbridgestep{i'}{\level}{\EventDeclassify{x}{\ell_{\mathit{from}}}{\ell_{\mathit{auth}}}{\ell_{\mathit{to}}}} \configThree{c'}{s'}{\pc'}$$

    We have that
    $m \loweq{\ell_{\mathit{auth}} \sqcup \ell} s$
    so from the synchronized bridge above it follows that
    $m_j \indprop{(\loweq{\ell_{\mathit{auth}} \sqcup \ell})}{c}{\pc}{\ell_{\mathit{auth}} \sqcup \ell}{\beta_1 \dots \beta_j} s_j$.
    Using \fullref{lemma:propagation-ell-equivalence} we can also conclude that
    $\indprop{(\loweq{\ell_{\mathit{auth}} \sqcup \ell})}{c}{\pc}{\ell_{\mathit{auth}} \sqcup \ell}{\beta_1 \dots \beta_j} \subseteq (\loweq{{\ell_{\mathit{auth}} \sqcup \ell}})$, so it must be the case that $m_j \loweq{{\ell_{\mathit{auth}} \sqcup \ell}} s_j$.
    Now we can obtain exactly what we need from Case~\ref{bridge-item:1} of \fullref{lem:NI-bridge}.
  \end{description}
\item[$\alpha$ is $\EventTiniExit{\eta}{.}{\ell_{\mathit{from}}}{.}$:]
  We need to show that two conditions:
  \begin{align}
    \label{case:21}
    \knowledge(\cmd, \mem, \events \cdot \alpha, \level) \supseteq \progressknowledge(\cmd, \mem, \events, \ell) \\
    \label{case:22}
    \progressknowledge(\cmd, \mem, \events, \ell) \supseteq \knowledge(\cmd, \mem, \events, \ell_{\mathit{auth}} \sqcup \ell)
  \end{align}
  \begin{description}
  \item[$\knowledge(\cmd, \mem, \events \cdot \alpha, \level) \supseteq \progressknowledge(\cmd, \mem, \events, \ell)$:]
    To show Condition~\eqref{case:21}, suppose $s \in \progressknowledge(\cmd, \mem, \events, \ell)$.
    By unfolding the definition of progress knowledge we therefore have that
    $\configThree{c}{s}{\pc} \stepsmanywith{\events'} \configThree{c''}{s'}{\pc''}$
    where $\tproj{\events'}{\level} = \tproj{\events}{\level} \cdot \alpha'$ for some $\alpha'$ and that $m \loweq{\level} s$.
    We need to show that $s \in \knowledge(\cmd, \mem, \events \cdot \alpha, \level)$, which amounts to showing that
    $\alpha = \alpha'$.
    Now since we know that $\alpha$ is a public event, we know that $\pc' \sqsubseteq \ell$ and hence we obtain what we need from Case~\ref{bridge-item:4} of applying \fullref{lem:NI-bridge}.
  \item[$\progressknowledge(\cmd, \mem, \events, \ell) \supseteq \knowledge(\cmd, \mem, \events, \ell_{\mathit{auth}} \sqcup \ell)$:]
    To show Condition~\eqref{case:22}, suppose $s \in \knowledge(\cmd, \mem, \events, \level \sqcup \ell_{\mathit{auth}} \sqcup \ell)$.
    By unfolding the definition of knowledge this entails that
    $\configThree{c}{s}{\pc} \stepsmanywith{\events'} \configThree{c''}{s'}{\pc''}$
    where $\tproj{\events'}{\level_{\mathit{auth}} \sqcup \level} = \tproj{\events}{\level_{\mathit{auth}} \sqcup \level}$ and $m \loweq{\level_{\mathit{auth}} \sqcup \level} s$.
    So using \fullref{lem:runs-to-sync-bridge} we have that
    $$\syncConfig{c}{m}{s}{\pc} \syncbridge{\ell_{\mathit{auth}} \sqcup \ell}{\beta_1, \beta_2, \dots, \beta_j} \syncConfig{c_j}{m_j}{s_j}{\pc_j}$$
    where
    $\configThree{c_j}{m_j}{\pc_j} \lbridgestep{i}{\EventTiniExit{\eta}{\ell_{\mathit{auth}}}{\level}{\ell_{\mathit{to}}}}{\level} \configThree{c'}{m'}{\pc'} $
    We need to show that
    $s \in \progressknowledge(\cmd, \mem, \events, \level)$
    which now amounts to showing that there exists $\alpha'$ such that
    $$\configThree{c_j}{s_j}{\pc_j} \lbridgestep{i'}{\level}{\alpha'} \configThree{c''}{s'}{\pc''}$$
    
    We have that
    $m \loweq{\level_{\mathit{auth}} \sqcup \level} s$ so from the synchronized bridge above it follows that
    $m_j \indprop{(\loweq{\level_{\mathit{auth}} \sqcup \level})}{c}{\pc}{\level_{\mathit{auth}} \sqcup \level}{\beta_1, \dots, \beta_j} s_j$.
    Using \fullref{lemma:propagation-ell-equivalence} we can also conclude that
    $\indprop{(\loweq{\level_{\mathit{auth}} \sqcup \level})}{c}{\pc}{\level_{\mathit{auth}} \sqcup \level}{\beta_1 \dots \beta_j} \subseteq (\loweq{\level_{\mathit{auth}} \sqcup \level})$, so it must be the case that $m_j \loweq{\level_{\mathit{auth}} \sqcup \level} s_j$.
    Again, we are now in a position to obtain exactly what we need from Case~\ref{bridge-item:2} of applying \fullref{lem:NI-bridge}.
  \end{description}
\item[Otherwise:]
  We need to show
  \begin{align*}
    \knowledge(\cmd, \mem, \events \cdot \alpha, \level)
    \supseteq \knowledge(\cmd, \mem, \events, \level)
  \end{align*}
  So suppose $s \in \knowledge(\cmd, \mem, \events, \level)$.
  This entails that there exists run such that
  $\configThree{c}{s}{\pc} \stepsmanywith{\events'} \configThree{c''}{s''}{\pc''}$
  where $\tproj{\events'}{\level} = \tproj{\events}{\level} = \beta_1, \beta_2,
  \ldots , \beta_j$.

  Again, using \fullref{lem:runs-to-sync-bridge} we can conclude that
  $$\syncConfig{c}{m}{s}{\pc} \syncbridge{\ell}{\beta_1, \beta_2, \ldots, \beta_j} \syncConfig{c_j}{m_j}{s_j}{\pc_j}$$
  such that
  \begin{displaymath}
    \configThree{c_j}{m_j}{\pc_j} \lbridgestep{i}{\level}{\alpha} \configThree{c'}{m'}{\pc'}
  \end{displaymath}
  which we can use to conclude that $m \indprop{\loweq{\level}}{c}{\pc}{\level}{\beta_1, \dots \beta_j} s$,
  so by \fullref{lemma:propagation-ell-equivalence} we have that $m_j \loweq{\level} s_j$.

  We need to show that
  $s \in \knowledge(\cmd, \mem, \events \cdot \alpha, \level)$ which amounts to showing that
  $$\configThree{c_j}{s_j}{\pc_j} \lbridgestep{m'}{\level}{\alpha} \configThree{c'}{s'}{\pc'}$$
  Now since we know that $\obs{\alpha}{\level}$, we know that $\pc' \sqsubseteq \level$.
  Hence we can conclude what we need from Case~\ref{bridge-item:3} of applying \fullref{lem:NI-bridge}.
\end{description}

%%% Local Variables:
%%% mode: latex
%%% TeX-master: "csf2020"
%%% End:
}

\end{document}